\def\endthebibliography{%
	\def\@noitemerr{\@latex@warning{Empty `thebibliography' environment}}%
	\endlist
}
\newtheorem{theorem}{Theorem}
\newtheorem{lemma}{Lemma}
\newtheorem{proposition}{Proposition}
\newtheorem{definition}{Definition}
\newtheorem{remark}{Remark}
\renewcommand{\d}[1]{\ensuremath{\operatorname{d}\!{#1}}}
\newcommand{\ad}{\mathrm{ad}}
\newcommand{\adm}{\mathrm{adm}}
\newcommand{\Ad}{\mathrm{Ad}}
\newcommand{\expm}{{\mathrm{exp_m}}}
\newcommand{\dexp}{{\mathrm{dexp}}}
\newcommand{\dexpm}{{\mathrm{dexp_m}}}
\begin{document}
		%
		% paper title
		% Titles are generally capitalized except for words such as a, an, and, as,
		% at, but, by, for, in, nor, of, on, or, the, to and up, which are usually
		% not capitalized unless they are the first or last word of the title.
		% Linebreaks \\ can be used within to get better formatting as desired.
		% Do not put math or special symbols in the title.
		\title{{Error Dynamics} in Affine Group Systems}

		% author names and affiliations
		% use a multiple column layout for up to three different
		% affiliations
		\author{
  Xinghan Li, Jianqi Chen, Han Zhang, Jieqiang Wei, and Junfeng Wu % <-this %
			%stops a space
			%\thanks{*.}
			\thanks{ 
%This work was partly supported by NSFC under Grant No.62273288. 
X. Li is with the College of Control Science and Engineering, Zhejiang University, Hangzhou, P. R. China; Email: xinghanli0207@gmail.com. J. Chen is with the Department of Electronic and Computer Engineering, HKUST, Hong Kong; Email: eejianqichen@ust.hk. H. Zhang is with Department of Automation, Shanghai Jiao Tong University, 
shanghai, P. R. China.
J. Wei is with Ericsson, Torshamnsgatan 21, 16440, Sweden; Email:
jieqiang.wei@gmail.com. J. Wu is with the School of Data Science, CUHK-Shenzhen, Shenzhen, P. R. China; Email:  junfengwu@cuhk.edu.cn. 
			}
		}

		% make the title area
		\maketitle
		
		% As a general rule, do not put math, special symbols or citations
		% in the abstract
		\begin{abstract}
			Error dynamics captures the evolution of the state errors between two distinct trajectories that are governed by the same system rule but initiated or perturbed differently. It is essential to analyze the error behaviour of a state observer in a matrix Lie group for precise estimations. This paper concentrates on the examination of the error dynamics in affine group systems when exposed to external disturbances or random noises.
   To this end, we first discuss the connections between the notions of affine group systems and linear group systems. We characterize linear group systems by homeomorphism of its transition flow and linearity of its Lie algebra. Next, we explore linear group systems that are spread by a Brownian motion in tangent spaces. {We discover that the it\^o stochastic differential equation in the Lie algebra space has a linear drift component and an additional ``pinning'' term.} We apply these findings in analyzing the error dynamics, a specific linear group system, with smooth disturbances and stochastic noise. The error dynamics can be in the form of the Lie group and the Lie algebra. An explicit and accurate derivation of error dynamics is provided for the concrete example $SE_2(3)$, which plays a vital role, especially in inertial-based navigation.

		\end{abstract}
		
		% no keywords

		% For peer review papers, you can put extra information on the cover
		% page as needed:
		% \ifCLASSOPTIONpeerreview
		% \begin{center} \bfseries EDICS Category: 3-BBND \end{center}
		% \fi
		%
		% For peerreview papers, this IEEEtran command inserts a page break and
		% creates the second title. It will be ignored for other modes.
		\IEEEpeerreviewmaketitle

		\section{Introduction}
		The study of the error dynamics for a state observer in matrix Lie groups is a fundamental problem in {estimation applications}, such as attitude estimation~\cite{Batista2012,Trumpf2012}, pose estimation~\cite{Zhang2022OnRO,Lei2022OptimizationOD,Zhang2021}, and localization and mapping~\cite{li2022closed,mahony_observer_2022}. {Because most of these applications are in the affine group systems~(see Definition~\ref{def:affine_system_def}) and have a symmetry structure, researchers can explore Kalman filtering based on the invariant error~\cite{Barrau}.} Such filtering methods involve a nonlinear change of variables to transform the error evolution into a linear differential equation in the absence of external disturbances and observation noises. Motivated by this, this paper focuses on examining the error dynamics of affine group systems under external disturbances or random noises. The error dynamics captures the evolution of state errors between two separate trajectories governed by the same system rule but with different initialization or perturbations. 
		We aim to exploit the evolution of the invariant error in affine groups and derive nonlinear differential equations to assist filtering design in such systems.
		
{The recent recognition of Lie group symmetry~\cite{Barrau,Goor2023EquivariantF} in the design of observers has been demonstrated to yield considerable improvements in performance when compared to an extended Kalman filter (EKF) that has not taken the symmetry into account. Multiplicative EKF (MEKF)~\cite{Lefferts1982} is one of the earliest work in this area. In particular, it focuses on filtering in the $SO(3)$ group. Subsequent research has explored similar for the special Euclidean group ($SE(3)$)\cite{Hua2011, Wang2019} and the special linear group ($SL(3)$)\cite{Hamel2011, Hua2020}. Moreover, invariant EKF (InEKF)~\cite{Barrau} introduce affine group systems, allowing higher-than-first-order kinematics in $SE_{N}(3)$. All methods require the acquisition of error dynamics for global convergence and covariance propagation. The error dynamics in the presence of disturbances can be estimated using a first-order approximation of the error without an explicit expression. This approximation, however, may lead to inaccurate estimates of the covariance~\cite{Martin2021} or even incorrect conclusions about the relationship between the estimate and the true states~\cite{Maryak2004}.}
		
		Besides, several studies have been conducted to examine stochastic processes in Lie groups~\cite{CASTELL199513,Marjanovic2018}. Barrau and Bonnabel~\cite{Barrau2018CDC} have proposed Stratonovich stochastic differential equations~(SDE's) on state within the Lie group. However, SDE's in Lie algebras are essential for analyzing statistics (e.g.  covariance) and its It\^o form is a requirement for the majority of numerical methods (e.g. Euler scheme). Marjanovi and Solo~\cite{Marjanovic2018} proposed geometric SDE's approach for matrix Lie groups, which is expressed in the form of the Lie algebra. Inspired by their work, we continue to derive the geometric SDE's of invariant errors in an affine group system under disturbances.  Smooth one and Brownian motions, are considered as the pertubation in the tangent space of the state. In particular, our main contributions are two-fold:
		\begin{enumerate}[$(i)$.]
			\item We analyze the linear group system in Proposition~\ref{pro:linear_system_property} in detail before delving into the error dynamics of the affine group system. We derive two equivalent characterizations of a linear group system using the homomorphism of its transition flow and the linearity of its Lie algebra counterpart, see Proposition~\ref{pro:linear_system_property} for greater details. 
			We further study the evolution of a linear group
			system diffused (in a left-invariant or right-invariant manner) by a Brownian motion in tangent spaces, which is written into a Stratonovich SDE. Theorem~\ref{thm:1} shows that the dynamics projected in the Lie algebra is governed by an it\^o SDE with a linear drift term and a ``pining term'', given in~\eqref{eqn:continous_time_sde}. 
			
			\item We use the previously mentioned discoveries to analyze the error dynamics between two different matrix group trajectories that are regulated by the same affine system. In this manner, error dynamics is a linear group system.  We obtain an ordinary differential equation (ODE) describing the development of the projected errors in the Lie algebra in the presence of differentiable disturbances, as stated in Theorem~\ref{thm: log_linear_specific_model}. Theorem~\ref{thm:SDE_of_Lie_logarithm} is derived to derive the counterpart with stochastic ones for errors in the form of an SDE.
			
			%We demonstrate the effectiveness of the SDE scheme through numerical simulations.
		\end{enumerate}
		The remainder of the paper is organized as follows. Section~\ref{sec:preliminary} gives a brief preliminary of matrix Lie groups. Section~\ref{sec:affine group system} presents the main results for affine and linear group systems. Section~\ref{sec:sde}
		presents our main results in analyzing the error dynamics between two trajectories generated by an affine group system rule and in a special matrix group class $SE_N(3)$. Section~\ref{sec:conclusion} concludes the paper.
		%		\begin{figure}
			%		\centering
			%		\includegraphics[width = 0.45\textwidth]{fig/illustration_problem.png}
			%		\caption{Illustration of our problem. We want to find the evolution of the uncertainty $\xi$ with associating state in the Lie group. Suppose that $\xi$ at each moment satisfies the Gaussian distribution. We want to obtain the approximated expectation and covariance of the Gaussian distribution at each moment to analyze the random variable..}
			%		\label{fig:illustration_of_problem}
			%	\end{figure}
		%\subsection{Notation}
		% The left Jacobian with adjoint representation, considered as a vector space, is denoted as ${J}_{\xi}\triangleq\sum_{i=0}^{\infty}\frac{1}{i+1!}\ad^i_{\xi^\wedge}$, while the left Jacobian, seem as a mapping, is denoted  as $\mathcal{J}_\xi(\cdot):\mathfrak{g}\rightarrow\mathfrak{g}, {x}\mapsto\sum_{i=0}^{\infty}\frac{1}{i+1!}\ad^i_{\xi^\wedge}{x}$.
		\section{Matrix Lie Groups Preliminaries}\label{sec:preliminary}
		%% lack of automorhism and Jacobian of matrix
		%%general linear group	

		The paper discusses matrix Lie groups over the field of real numbers, denoted by $\mathbb{R}$, which are closed subgroups of the general linear group $GL(n,\mathbb{R})$ consisting of $n\times n$ invertible matrices with entries from $\mathbb{R}$. Every matrix Lie group is associated with a Lie algebra, whose underlying vector space is the tangent space of the Lie group at the identity element. The Lie algebra completely captures local structure of the Lie group. The matrix exponential $\mathop{\rm exp}$ is a mapping to a matrix Lie group from its Lie algebra. 
		In particular, the Lie algebra, denoted by $\mathfrak{g}$, of a Lie group $G$ can be characterized by matrix exponential as:
		\begin{equation*}
			\mathfrak{g}=\left\lbrace x\in \mathbb{R}^{n\times n}|\mathop{\rm exp}{(tx)}\in G \text{ for all } t\in \mathbb{R} \right\rbrace .
		\end{equation*}
		The Lie algebra $\mathfrak{g}$ can be identified as $\mathbb{R}^{d}$, where $d$ is the dimension of the Lie algebra $\mathfrak g$, via a ``hat'' operation; it is a linear mapping denoted as $(\cdot)^\wedge:\mathbb{R}^{d}\rightarrow\mathfrak{g}$. 
		We compose the ``hat'' mapping and the matrix exponential resulting the Lie exponential, denoted as $\expm:\mathbb{R}^{d}\rightarrow G$. The Lie logarithm, denoted by $\log_{\rm m}:G\rightarrow \mathbb{R}^{d} $, is defined as the inverse of $\rm exp_m$ within a neighborhood of the identity of $G$.
		%%The exponential mapping
		
		%%Adjoint representation
		Let  $\Psi_X:G\rightarrow G$ be an automorphism defined as $Y\mapsto XYX^{-1}$. With the automorphism group $\text{Aut}(G)$  of $G$, we define  $\Psi: G\rightarrow \text{Aut}(G)$ , $X\mapsto\Psi_X$. Taking the derivative of $\Psi_{X}$ at the origin, we obtain the adjoint action of a Lie group $G$, which represents the element of the group as a linear transformation of the group's Lie algebra. The adjoint action of ${X}\in G$ on
		${y}\in\mathfrak{g}$ is defined as $\Ad_{X}:\mathfrak{g}\rightarrow\mathfrak{g},  {y}\mapsto{X}{y}{X}^{-1}$. Taking the derivative of the adjoint action group at the origin, an element of the resulting Lie algebra, denoted as $\ad_{{x}}:\mathfrak{g}\to\mathfrak{g}$, is given by 
		\begin{equation*}
			\ad_{{x}}({y})=\frac{\d{}}{\d{t}}(\Ad_{\mathop{\rm exp}{(t{x})}}({y}))|_{t=0}\triangleq[{x},{y}],
		\end{equation*} which is a linear mapping from $\mathfrak{g}$ to itself. It defines an adjoint action of ${\mathfrak {g}}$ on itself. The matrix representation  of $\ad_x$ is a mapping
			from $\mathbb{R}^d$ to $\mathbb{R}^d \times \mathbb{R}^d$ such that
			$(\ad_x y)^\vee=\adm_{x^\vee} (y^\vee),$
			where $(\cdot)^\vee$ is the inverse of the ``hat'' operation.
	
		An operator $\dexp_x\triangleq\sum_{i=0}^{\infty}\frac{1}{(i+1)!}(\ad_{x})^{i}$  is commonly defined in Lie group theory to study the relation between the derivative of a matrix Lie group and the Lie algebra.
		Specifically, it is known as the left Jacobian for $\exp(x)$ when $x$ is an element of the Lie algebra of the special Euclidean group $SE(3)$~\cite{barfoot_2017}. In addition to it, the right Jacobian is defined similarly but with the sign of $x$ turned negative in ``$\ad$''. The names ``left/right Jacobian'' are overloaded in the paper, as they are used to refer to related concepts in the context of matrix Lie groups.
		The matrix representation, termed as $\dexpm$, of $\dexp$ can be defined in a similar fashion.

		\section{Linear  Group Systems and Its Stochastic Analysis}\label{sec:affine group system}	
		{In this section, we will explore SDE's that evolve in linear group system before transitioning to an exploration of error dynamics within affine group systems.}
		
		\subsection{Deterministic Dynamical System: Linear Group Systems and Affine Group Systems}
		Consider a continuous-time system evolving in $G$:
		\begin{equation}\label{eqn:continous_time_dynamics}
			\dot{X_t}\triangleq\frac{\d{}}{\d{t}}  X_{t}=f_{u_t}({X}_{t}),~~t\geq 0,
		\end{equation}
		where $u_t\in\mathcal{U}\subset\mathbb{R}^{d}$ is the input and $X_0$ is the initial state at $t=0$.  From~\eqref{eqn:continous_time_dynamics}, we know that the range of $f_{u_t}$ is the tangent space $T_{X_t}G$ at $X_t$. The trajectory $X_t$ varies when the system state is initialized differently. We denote the state transition flow from $X_0$ to $X_t$ at time $t$ as $\phi_t(X_0)$.

		Next, let us introduce the definition and properties of a linear group system.
		\begin{definition}[Linear Group System]\label{def:linear dynamics}
			The dynamical system~\eqref{eqn:continous_time_dynamics} is said to be a linear group system if $f_{u_t}$ satisfies the following property for any $u_t\in\mathcal U$ and $X,Y\in G$:
			\begin{equation}\label{eqn:linear group system}
				f_{u_t}(XY)=f_{u_t}(X)Y+Xf_{u_t}(Y) .
			\end{equation}
		\end{definition}
		
		\begin{proposition}[Properties of the linear group system]\label{pro:linear_system_property}
			Let $X_t\in G$ be the state of system~\eqref{eqn:continous_time_dynamics} and $x_t$ denote the Lie logarithm of $X_t$, i.e., $x_t\triangleq\log_{\rm m}( X_t)$. The following statements are equivalent:
			\begin{enumerate}[(i).]
				\item The system~\eqref{eqn:continous_time_dynamics} is a linear group system.
				\item It holds that $\phi_t(XY)=\phi_t(X)\phi_t(Y)$ for any $X,Y\in G$.
				\item The trajectory $x_t$ satisfies a linear ordinary differential equation
				\begin{equation}\label{eqn:dynamics_xi}
					\dot{x}_t=A_tx_t,
				\end{equation} 
				where $A_t$ is the differential of the mapping $x\mapsto(\exp_m{(x)}^{-1}f_{u_t}(\exp_m{(x)}))^\vee$ at $x=0$. 
			\end{enumerate}
		\end{proposition}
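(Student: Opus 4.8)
The plan is to prove the cycle of implications $(i)\Rightarrow(ii)\Rightarrow(iii)\Rightarrow(i)$, establishing the equivalence. Throughout, I would fix an input trajectory $u_t$ (so that the vector field $f_{u_t}$ is a fixed linear group vector field) and suppress it from the notation where convenient.

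\textbf{Step 1: $(i)\Rightarrow(ii)$.} First I would observe that the defining identity \eqref{eqn:linear group system} is precisely the product (Leibniz) rule, so that if $X_t$ and $Y_t$ both solve \eqref{eqn:continous_time_dynamics}, then the curve $Z_t\triangleq X_tY_t$ satisfies
\begin{equation*}
	\dot Z_t=\dot X_tY_t+X_t\dot Y_t=f_{u_t}(X_t)Y_t+X_tf_{u_t}(Y_t)=f_{u_t}(X_tY_t)=f_{u_t}(Z_t),
\end{equation*}
i.e. $Z_t$ is again a solution of \eqref{eqn:continous_time_dynamics}. Taking $X_0$ and $Y_0$ as the two initial conditions, $Z_t$ is the solution with initial condition $X_0Y_0$, so by uniqueness of solutions to the ODE, $\phi_t(X_0Y_0)=Z_t=\phi_t(X_0)\phi_t(Y_0)$. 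I should note that setting $X=Y=I$ in \eqref{eqn:linear group system} gives $f_{u_t}(I)=0$, so the identity element is an equilibrium and $\phi_t(I)=I$ for all $t$; this is needed so that the multiplicative flow property is consistent (and it confirms the map $\phi_t$ is a group homomorphism on $G$).

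\textbf{Step 2: $(ii)\Rightarrow(iii)$.} Assuming $\phi_t$ is a homomorphism, I would differentiate in the group direction rather than in $t$. Write $X_t=\expm(x_t)$ near the identity. Since $X_t=\phi_t(X_0)$ and $\phi_t$ is a homomorphism, for $X_0=\expm(s v)$ with $v\in\mathbb R^d$ we have $\phi_t(\expm(sv))=\phi_t(\expm(v))^{\,?}$ — more usefully, I would use that a continuous homomorphism of Lie groups induces a Lie algebra homomorphism, i.e. there is a linear map $\Phi_t:\mathfrak g\to\mathfrak g$ with $\phi_t\circ\expm=\expm\circ\,\Phi_t$. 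Hence $x_t=\Phi_t\big(x_0^\wedge\big)^\vee$ (identifying $\mathfrak g\cong\mathbb R^d$), which is linear in $x_0$; differentiating the semigroup identity $\Phi_{t+s}=\Phi_t\Phi_s$ in $s$ at $s=0$ yields $\dot\Phi_t=\Phi_t\cdot\dot\Phi_0$, so $x_t$ satisfies a linear ODE $\dot x_t=A_t x_t$. To identify $A_t$, I would differentiate $X_t=\expm(x_t)$ using the $\dexpm$ operator: $\dot X_t=X_t\big(\dexpm_{x_t}\dot x_t\big)^\wedge$, combined with $\dot X_t=f_{u_t}(X_t)=f_{u_t}(\expm(x_t))$, giving $\dot x_t=\big(\dexpm_{x_t}\big)^{-1}\big(X_t^{-1}f_{u_t}(X_t)\big)^\vee$; evaluating the linear operator $x\mapsto A_t x$ at $x_t\to 0$ and using $\dexpm_0=\mathrm{Id}$ identifies $A_t$ as the differential at $0$ of $x\mapsto\big(\expm(x)^{-1}f_{u_t}(\expm(x))\big)^\vee$, as claimed.

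\textbf{Step 3: $(iii)\Rightarrow(i)$.} This is the converse and I expect it to be the main obstacle, since from a statement about the logarithmic coordinate near the identity I must recover the global bilinear-type identity \eqref{eqn:linear group system} on all of $G$. The strategy is: hypothesis $(iii)$ says that in exponential coordinates the flow is linear, hence (exponentiating back) $\phi_t\circ\expm=\expm\circ\,e^{\int_0^t A_s ds}$ locally, which forces $\phi_t$ to be a local homomorphism near $I$; since $G$ is connected (being a matrix Lie group we may pass to the identity component, or invoke that the flow statement is assumed on all of $G$ via connectedness and that any matrix Lie group element is a product of exponentials), the local homomorphism extends to a global one, so $(ii)$ holds. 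Then from $(ii)$, differentiate $\phi_t(XY)=\phi_t(X)\phi_t(Y)$ at $t=0$: the left side gives $f_{u_0}(XY)$, the right side gives (product rule in the ambient matrix algebra) $f_{u_0}(X)Y+Xf_{u_0}(Y)$, which is \eqref{eqn:linear group system} at $t=0$; since $u_0$ is arbitrary in $\mathcal U$ (by considering, for each $u\in\mathcal U$, an input trajectory with $u_0=u$), the identity holds for all $u_t\in\mathcal U$. The delicate points to handle carefully are: (a) the passage from a local homomorphism on a neighborhood of $I$ to a global one, which uses connectedness of $G$ and that $\log_{\rm m}$ is only a local inverse of $\expm$; and (b) justifying differentiation under the composition and interchange of the $t$-derivative with evaluation — both are routine given smoothness of $f_{u_t}$ in its arguments, but worth a line. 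A cleaner alternative for $(iii)\Rightarrow(i)$ that sidesteps (a): directly differentiate $X_t=\expm(x_t)$ to get $X_t^{-1}\dot X_t = (\dexpm_{x_t}\dot x_t)^\wedge$; feeding in $\dot x_t = A_t x_t$ and using that, by definition of $A_t$, the map $x\mapsto (\expm(x)^{-1}f_{u_t}(\expm(x)))^\vee$ has differential $A_t$ at $0$, one shows $g_{u_t}(X)\triangleq f_{u_t}(X) - [\text{candidate linear expression}]$ vanishes to first order at $I$ and, being itself a vector field whose flow (by $(iii)$) is linear, must vanish identically — but this still ultimately invokes a connectedness/uniqueness argument, so I would present the homomorphism route as the main line and remark on the equivalence with Lemma~1 of~\cite{Barrau} for the portion already established there.
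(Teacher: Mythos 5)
Your Steps 1 and 2 are correct, and Step 1 is genuinely simpler than the paper's route: the paper proves the cycle $(i)\Rightarrow(iii)\Rightarrow(ii)\Rightarrow(i)$, obtaining the flow-homomorphism property only after passing through the Lie algebra (via Theorem 7 of Barrau--Bonnabel), whereas your Leibniz-rule argument (the product $Z_t=X_tY_t$ of two solutions is again a solution, then invoke uniqueness of the ODE flow) gives $(i)\Rightarrow(ii)$ directly. Your $(ii)\Rightarrow(iii)$ via the induced Lie algebra homomorphism $\Phi_t$ of the smooth group homomorphism $\phi_t$ is also a legitimate alternative to the paper's ordering; just drop the identity $\Phi_{t+s}=\Phi_t\Phi_s$, which is false for time-varying $u_t$ --- differentiability and invertibility of $t\mapsto\Phi_t$ already give $\dot x_t=\dot\Phi_t\Phi_t^{-1}x_t$.

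Step 3 contains a genuine gap. You assert that $(iii)$, i.e.\ $\phi_t(\expm(x))=\expm(\Phi_{t,0}x)$ with $\Phi_{t,0}$ the transition matrix of $\dot x=A_tx$, ``forces $\phi_t$ to be a local homomorphism near $I$.'' It does not. By Baker--Campbell--Hausdorff, $\phi_t(\expm(x_a)\expm(x_b))=\expm\left(\Phi_{t,0}\,\text{BCH}(x_a^\wedge,x_b^\wedge)^\vee\right)$ while $\phi_t(\expm(x_a))\phi_t(\expm(x_b))=\expm\left(\text{BCH}((\Phi_{t,0}x_a)^\wedge,(\Phi_{t,0}x_b)^\wedge)\right)$, and these agree only when $\Phi_{t,0}$ commutes with the Lie bracket. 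That bracket-preservation is exactly what the paper's Lemma~\ref{lemma:supporting_lemma_of_proposition_1} supplies, and it rests on $A_t$ being a derivation of the bracket (Lemma~\ref{lemma:A_t property}); neither fact follows from linearity of the coordinate flow alone. Concretely, on $SO(3)$ take $A_t\equiv I$, which is not a derivation of $\mathfrak{so}(3)$, and set $f(X)=X\left(\dexpm(-\log_{\rm m}X)\,A\log_{\rm m}X\right)^\wedge$: every trajectory satisfies $\dot x_t=Ax_t$, yet $e^{t}\,\text{BCH}(x_a^\wedge,x_b^\wedge)^\vee$ and $\text{BCH}(e^{t}x_a^\wedge,e^{t}x_b^\wedge)^\vee$ already differ in their $\tfrac12[x_a^\wedge,x_b^\wedge]^\vee$ terms, so $\phi_t$ is not a homomorphism and \eqref{eqn:linear group system} fails. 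Hence your $(iii)\Rightarrow(i)$ cannot be completed as written: you must first establish the derivation identity for $A_t$ and then prove that $\Phi_{t,0}$ preserves the bracket, as the paper does in its appendix. Your ``cleaner alternative'' has the same omission. The final piece of Step 3 --- differentiating $\phi_t(XY)=\phi_t(X)\phi_t(Y)$ at $t=0$ to recover \eqref{eqn:linear group system} --- coincides with the paper's $(ii)\Rightarrow(i)$ and is fine.
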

			{The proof can be found in Appendix~\ref{apx:linear_system_property}, where Lemma~1 of ~\cite{Barrau} explains how $(i)$ results in $(ii)$ and $(iii)$. Unlike the results in~\cite{Barrau}, we also investigate how a certain linear system is analogous to the linear group system.}
		
		\begin{lemma}\label{lemma:A_t property}
			In~\eqref{eqn:dynamics_xi}, $A_t$ satisfies that $$[(A_t x)^\wedge,y^\wedge]^\vee+[x^\wedge,(A_t y)^\wedge]^\vee=A_t[x^\wedge,y^\wedge]^\vee$$ for any $x,y\in\mathbb{R}^d$.		
		\end{lemma}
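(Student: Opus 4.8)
The identity to be proven says exactly that, under the hat/vee identification of $\mathfrak g$ with $\mathbb R^d$, the matrix $A_t$ is a \emph{derivation} of the Lie algebra $\mathfrak g$. The plan is to deduce this from Proposition~\ref{pro:linear_system_property}(ii), which states that the transition flow satisfies $\phi_t(XY)=\phi_t(X)\phi_t(Y)$, i.e.\ each $\phi_t$ is a Lie group homomorphism of $G$. Being the time-$t$ map of the flow of~\eqref{eqn:continous_time_dynamics}, $\phi_t$ is moreover a local diffeomorphism, and since $f_{u_t}(I)=0$ (set $X=Y=I$ in~\eqref{eqn:linear group system}) it fixes the identity. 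Hence its differential at the identity, which I will write as $\Phi_t$ and regard as a linear isomorphism of $\mathbb R^d$ via hat/vee, is a Lie algebra automorphism: $\Phi_t[x^\wedge,y^\wedge]^\vee=[(\Phi_t x)^\wedge,(\Phi_t y)^\wedge]^\vee$ for all $x,y\in\mathbb R^d$, with $\Phi_0=I$ and $\Phi_t$ depending $C^1$ on $t$ under the standing smoothness assumptions. The lemma will then follow by differentiating this automorphism relation in $t$, once $A_t$ is identified with $\dot\Phi_t\Phi_t^{-1}$.

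Concretely, I would proceed in three steps. (1) Identify $A_t=\dot\Phi_t\Phi_t^{-1}$: by naturality of the Lie exponential under homomorphisms one has $\phi_t(\expm(x_0))=\expm(\Phi_t x_0)$, so for $X_0=\expm(x_0)$ near $I$ the Lie logarithm gives $x_t=\Phi_t x_0$; comparing with the solution $x_t=\Xi_t x_0$ of~\eqref{eqn:dynamics_xi}, where $\dot\Xi_t=A_t\Xi_t$ and $\Xi_0=I$, and letting $x_0$ range over a neighborhood of the origin forces $\Xi_t=\Phi_t$, hence $A_t=\dot\Phi_t\Phi_t^{-1}$. (2) Differentiate the automorphism relation in $t$, using bilinearity of the bracket, to obtain $\dot\Phi_t[x^\wedge,y^\wedge]^\vee=[(\dot\Phi_t x)^\wedge,(\Phi_t y)^\wedge]^\vee+[(\Phi_t x)^\wedge,(\dot\Phi_t y)^\wedge]^\vee$. (3) For arbitrary $x,y$, apply this with $x,y$ replaced by $\Phi_t^{-1}x,\Phi_t^{-1}y$; the automorphism relation turns $[(\Phi_t^{-1}x)^\wedge,(\Phi_t^{-1}y)^\wedge]^\vee$ into $\Phi_t^{-1}[x^\wedge,y^\wedge]^\vee$, so the left-hand side becomes $A_t[x^\wedge,y^\wedge]^\vee$, while on the right $\Phi_t\Phi_t^{-1}=I$ and $\dot\Phi_t\Phi_t^{-1}=A_t$ produce $[(A_t x)^\wedge,y^\wedge]^\vee+[x^\wedge,(A_t y)^\wedge]^\vee$, which is the claimed identity.

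The main obstacle is step (1): carefully linking the matrix $A_t$, which is defined through the vector-space ODE~\eqref{eqn:dynamics_xi}, to the differential of the flow-homomorphism $\phi_t$. This relies on the naturality relation $\phi_t\circ\expm=\expm\circ\Phi_t$, the invertibility of $\Phi_t$, and the $C^1$-dependence of $\Phi_t$ on $t$; these are standard but should be stated explicitly. Everything after step (1) is routine differentiation and substitution. If one prefers to avoid the flow machinery altogether, an equivalent computational route is available: with $F(X)\triangleq(X^{-1}f_{u_t}(X))^\vee$ (so that $A_t$ is the differential of $F$ at $I$ in hat/vee coordinates), the linear group system property~\eqref{eqn:linear group system} yields the cocycle identity $F(XY)^\wedge=\Ad_{Y^{-1}}\!\big(F(X)^\wedge\big)+F(Y)^\wedge$; substituting $X=\expm(s\xi)$, $Y=\expm(s\eta)$, expanding both sides to order $s^2$ via the Baker--Campbell--Hausdorff formula (and using $F(I)=0$), and equating the $s^2$ coefficients---together with the equation obtained by swapping $\xi\leftrightarrow\eta$---again gives the derivation identity. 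I would present the homomorphism route as the main proof, since it is shorter and more conceptual.
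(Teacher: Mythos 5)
Your proof is correct in substance but takes a genuinely different route from the paper. The paper's proof is a purely local computation: it expands $f_{u_t}(\Ad_XY)=f_{u_t}(XYX^{-1})$ in two ways --- once via $\Ad_X=\exp(\ad_{x^\wedge})$ and the defining first-order expansion $f_{u_t}(\exp_{\rm m}(x))=(A_tx)^\wedge+o(\|x\|)$, and once by applying the Leibniz identity~\eqref{eqn:linear group system} twice and expanding each factor to first order --- and then matches the bilinear $xy$ terms to read off $A_t\adm_xy=\adm_{A_tx}y+\adm_x(A_ty)$. Your argument is instead global/flow-theoretic: the time-$t$ flow is a group homomorphism, its differential $\Phi_t$ at the identity is therefore a curve of Lie algebra automorphisms through the identity, and the logarithmic derivative $\dot\Phi_t\Phi_t^{-1}=A_t$ of such a curve is automatically a derivation. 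This is more conceptual and explains \emph{why} the lemma holds; the paper's computation is more elementary and self-contained (it needs only the definition of $A_t$ and property~\eqref{eqn:linear group system}, not Proposition~\ref{pro:linear_system_property}). Your secondary "cocycle" route, $F(XY)^\wedge=\Ad_{Y^{-1}}(F(X)^\wedge)+F(Y)^\wedge$ with a second-order BCH expansion, is essentially a reorganized version of the paper's computation.

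One caveat you must address before your main route can be spliced into this paper: you invoke Proposition~\ref{pro:linear_system_property}(ii), but the paper establishes $(i)\Rightarrow(ii)$ only through $(iii)$ and Lemma~\ref{lemma:supporting_lemma_of_proposition_1}, whose proof explicitly uses the derivation identity of Lemma~\ref{lemma:A_t property}. As written, your argument is therefore circular \emph{relative to the paper's logical ordering}. The fix is easy and you should state it: $(i)\Rightarrow(ii)$ admits a direct proof that bypasses Lemma~\ref{lemma:A_t property} entirely --- set $Z_t\triangleq\phi_t(X)\phi_t(Y)$, note $\dot Z_t=f_{u_t}(\phi_t(X))\phi_t(Y)+\phi_t(X)f_{u_t}(\phi_t(Y))=f_{u_t}(Z_t)$ by~\eqref{eqn:linear group system}, and conclude $Z_t=\phi_t(XY)$ by uniqueness of solutions. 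With that supplied, your step (1) (naturality $\phi_t\circ\exp_{\rm m}=\exp_{\rm m}\circ\Phi_t$, invertibility of $\Phi_t$ as a flow differential, and identification with the state transition matrix of~\eqref{eqn:dynamics_xi}) and steps (2)--(3) are sound.
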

		\begin{proof}
			Let $X\triangleq\expm{(x)},Y\triangleq\expm{(y)}$. Then,
				\begin{align}\label{eqn:proof_A_t}
				f_{u_t}(\Ad_XY)&=f_{u_t}(\exp(\Ad_{X}y^\wedge))\notag\\
				&=(A_t(\Ad_{X}y^\wedge)^\vee)^\wedge+o(\|\Ad_Xy^\wedge\|)\notag\\
						&=(A_ty+A_t \adm_x y)^\wedge+o(\|x\|)+o(\| y\|).
			\end{align}
			By~\eqref{eqn:linear group system}, we also have
			$
			f_{u_t}(\Ad_XY)=f_{u_t}(X)YX^{-1}+Xf_{u_t}(Y)X^{-1}+XYf_{u_t}(X^{-1}).
			$ On the other hand, $f_{u_t}(\exp_{\rm m}{(x)})=(A_tx)^\wedge+o(\|x\|)$ and $\exp_{\rm m}{(x)}=I+x^\wedge+o(\|x\|)$ hold for any $x\in\mathbb{R}^d$, it writes 
			\begin{align*}
				&f_{u_t}(X)YX^{-1}+Xf_{u_t}(Y)X^{-1}+XYf_{u_t}(X^{-1})\\
			=&((A_tx)^\wedge\hspace{-1mm}+\hspace{-1mm}o(\|x\|))(I\hspace{-1mm}+\hspace{-1mm}y^\wedge\hspace{-1mm}+\hspace{-1mm}o(\|y\|))(I\hspace{-1mm}-\hspace{-1mm}x^\wedge\hspace{-1mm}+\hspace{-1mm}o(\|x\|))\hspace{-1mm}+\hspace{-1mm}\cdots\\
				=&(A_ty+\adm_{A_tx} y+\adm_{x} A_ty)^\wedge+o(\| x\|)+o(\| y\|),
			\end{align*}
			which together with~\eqref{eqn:proof_A_t} implies that  $A_t \adm_x y=\adm_{A_tx} y+\adm_{x}( A_ty)$.
		\end{proof}
	
   {
   Proposition~\ref{pro:linear_system_property} shows that the space in which the state of the linear group system evolves is homeomorphic to the space in which its Lie algebra evolves in the vector space. This allows us to focus on the linear system in the vector space instead of the linear group system. } We will now look at affine group systems and their relationships to linear group system.	
		\begin{definition}[Affine Group System \cite{Barrau}]\label{def:affine_system_def}
			The dynamical system~\eqref{eqn:continous_time_dynamics} is said to be an affine group system if $f_{u_t}$ satisfies the following property for any $u_t\in\mathcal U$, and $X,Y\in G$:
			\begin{equation}\label{eqn:affine_system_def}
				f_{u_t}(XY)=f_{u_t}(X)Y+Xf_{u_t}(Y)-Xf_{u_t}(I)Y, 
			\end{equation}
		\end{definition}
		\begin{lemma}[Connection of the two systems, Lemma~1 of \cite{Barrau}]\label{pro:linear_and_affine}
			The dynamics $\dot{X_t}=f_{u_t}(X_t)-X_tf_{u_t}(I)$ or $\dot{X_t}=f_{u_t}(X_t)-f_{u_t}(I)X_t$ is a linear group system if $f_{u_t}$ satisfies~\eqref{eqn:affine_system_def} for all $u_t\in \mathcal U$.
		\end{lemma}
		{Lemma~\ref{pro:linear_and_affine} shows us that the affine group system can be converted to the linear group system by eliminating the vector field evaluated at the identity through multiplication with the state.} 
		
		\subsection{Linear Group Systems with White Gaussian Noise}\label{sec:affine_group_system_with_nondifferentiable_noise}
		In this part, we investigate the evolution of a linear group system when it is diffused by white Gaussian noise in tangent spaces. Let $W_t$ denote an $d$-dimensional Brownian motion with derivative given by $\d{W}_t\sim\mathcal{N}(0,{I}{\d{t}})$. Here, ${W_{t,i}}$ represents the $i$-th component of $W_t$. We consider the following two stochastic processes, called left-invariant diffusion (LID) and right-invariant diffusion (RID), in $G$:
		\begin{equation}\label{eqn:dynamic_estimate_state_sde}
			\begin{split}
				\d{{X}}_t &= f_{u_t}({{X}}_t)\d{t}+\sum_{i=1}^{d}X_t s_i^\wedge\circ\d{W_{t,i}} \quad\text{(LID)},	\\
				\d{{X}}_t &= f_{u_t}({{X}}_t)\d{t}+\sum_{i=1}^{d}s_i^\wedge X_t \circ\d{W_{t,i}} \quad\text{(RID)}.
			\end{split}
		\end{equation}
		The two forms in~\eqref{eqn:dynamic_estimate_state_sde} differ in noise because a Brownian motion can either drive a left-invariant or a right-invariant vector field. The symbol $\circ$ denotes the Stratonovich calculus~\cite{ksendal1987StochasticDE}. The matrices $s_1^\wedge, \ldots,s_d^\wedge$ form an orthogonal basis of $\mathfrak{g}$, and each vector $s_i$ reflects the strength of $\d{W}_{t,i}$ applied to the vector field from respective subspaces. Therefore, the covariance of the noise can be written as $\Sigma^d_{i=1}s_i s_i^\top$.

		In stochastic integral, we may obtain different solutions depending on where we evaluate the integrated function. The It\^o and Stratonovich choices are the most natural ones, with Stratonovich's choice preserving the standard chain rule and being more intuitive for defining a stochastic process on a smooth manifold. {However, the It\^o SDE is more suitable for numerical solutions of the Euler scheme\cite{Piggott2016GeometricES} since the evaluation of the integrated function is not correlated with the increment in the It\^o integral.}
		\begin{proposition}[It\^o SDE of~\eqref{eqn:dynamic_estimate_state_sde}]\label{pro:sde_estimate_state}
			Consider~\eqref{eqn:dynamic_estimate_state_sde} in the Stratonovich sense. The It\^o forms of LID and RID are given by		\begin{equation}\label{eqn:sde_ito_left}
				\begin{split}
					\d{X}_t &= \left( f_{u_t}({{X}}_t)+\frac{1}{2}{{X}}_t\sum_{i=1}^{d}(s_i^\wedge)^2\right) \d{t}+{{X}}_t\sum_{i=1}^{d}s_i^\wedge \d{W_{t,i}},
				\end{split}
			\end{equation} 
			\begin{equation}\label{eqn:sde_ito_right}
				\begin{split}
					\d{X}_t &= \left( f_{u_t}({{X}}_t)+\frac{1}{2}\sum_{i=1}^{d}(s_i^\wedge)^2{{X}}_t\right) \d{t}
					+\sum_{i=1}^{d}s_i^\wedge {{X}}_t\d{W_{t,i}}.
				\end{split}
			\end{equation} 
		\end{proposition}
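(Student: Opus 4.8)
The plan is to invoke the classical Stratonovich-to-It\^o conversion formula and to observe that, for both the LID and the RID in~\eqref{eqn:dynamic_estimate_state_sde}, the diffusion coefficients are \emph{linear} maps of the state $X_t$; this makes the It\^o--Stratonovich correction term immediate to evaluate. No new machinery beyond the conversion rule (as found in~\cite{ksendal1987StochasticDE,Zhang2016}) is needed.

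First I would set up notation. Write the LID as $\d{X}_t = \mu(X_t)\,\d t + \sum_{i=1}^{d}\sigma_i(X_t)\circ\d{W_{t,i}}$, where $\mu(X) \triangleq f_{u_t}(X)$ and $\sigma_i(X)\triangleq X s_i^\wedge$. Viewing $X$ as an element of the ambient vector space $\mathbb{R}^{n\times n}\cong\mathbb{R}^{n^2}$ and applying the scalar conversion identity componentwise, the equivalent It\^o equation is
\[
  \d{X}_t = \Big(\mu(X_t) + \tfrac{1}{2}\sum_{i=1}^{d}(D\sigma_i)(X_t)\big[\sigma_i(X_t)\big]\Big)\,\d t + \sum_{i=1}^{d}\sigma_i(X_t)\,\d{W_{t,i}},
\]
where $(D\sigma_i)(X)[\cdot]$ denotes the differential of $\sigma_i$ at $X$, so that the correction is the half-sum of the directional derivatives of each $\sigma_i$ along itself. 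This is the matrix-valued form of the familiar identity $\tilde b^k = b^k + \tfrac12\sum_{i,l}\sigma_i^l\,\partial_l\sigma_i^k$.

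Next I would compute the correction. Since $X\mapsto X s_i^\wedge$ is linear, its differential at any $X$ in a direction $V\in\mathbb{R}^{n\times n}$ is just $V\mapsto V s_i^\wedge$; hence $(D\sigma_i)(X)[\sigma_i(X)] = (X s_i^\wedge)\,s_i^\wedge = X (s_i^\wedge)^2$, and the correction drift is $\tfrac12 X_t\sum_{i=1}^{d}(s_i^\wedge)^2$, which is exactly the extra term in~\eqref{eqn:sde_ito_left}. The RID is handled identically with $\sigma_i(X)\triangleq s_i^\wedge X$: its differential in a direction $V$ is $V\mapsto s_i^\wedge V$, so $(D\sigma_i)(X)[\sigma_i(X)] = s_i^\wedge(s_i^\wedge X) = (s_i^\wedge)^2 X$ and the correction becomes $\tfrac12\sum_{i=1}^{d}(s_i^\wedge)^2 X_t$, giving~\eqref{eqn:sde_ito_right}.

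The only delicate point --- more a matter of care than a genuine obstacle --- is justifying that the conversion rule, usually stated for SDEs on $\mathbb{R}^m$, legitimately applies here although $X_t$ is confined to the submanifold $G\subset GL(n,\mathbb{R})$. I would resolve this by noting that each $\sigma_i(X)=X s_i^\wedge$ (resp.\ $s_i^\wedge X$) belongs to $T_X G$, so the Stratonovich equation is intrinsic to $G$ and its solution remains in $G$; the conversion may then be carried out in the ambient space $\mathbb{R}^{n\times n}$, and the resulting It\^o equation still keeps $X_t\in G$ --- even though the correction term $\tfrac12 X_t\sum_i(s_i^\wedge)^2$ is in general \emph{not} tangent to $G$, which is precisely the well-known discrepancy between It\^o and Stratonovich calculus on manifolds.
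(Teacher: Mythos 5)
Your proposal is correct and follows essentially the same route as the paper: both apply the standard Stratonovich-to-It\^o conversion and reduce the correction drift to the derivative of the diffusion coefficient along itself, which for the linear maps $X\mapsto Xs_i^\wedge$ (resp.\ $X\mapsto s_i^\wedge X$) yields $X_t(s_i^\wedge)^2$ (resp.\ $(s_i^\wedge)^2X_t$). The paper phrases this via the semimartingale identities $B\circ\d{A}=B\d{A}+\tfrac12\d{B}\d{A}$ and the It\^o multiplication table rather than the closed-form correction formula, but the computation is the same, and your closing remark about the non-tangential correction matches the paper's discussion of the ``pinning drift.''
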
 
		\begin{proof}
			We use Theorem 1.2 in \cite[CH.3]{ikeda2014stochastic} to obtain the It\^o form from Stratonovich form. For continuous semimartingales\footnote{For the sake of coherence, all discussion is this paper is limited to the case where the referred stochastic process is semimartingales.} ${A},{B},$ and ${C},$
			\begin{align}
				{B}\circ\d{{A}}&={B}\d{A}+\frac{1}{2}\d{{B}}\d{{A}},\label{eqn:proposition2_2}\\
				\left( {A}\circ\d{{B}}\right)\d{{C}}&={A}\left( \d{B}\d{C}\right).  \label{eqn:proposition2_3}
			\end{align}		
			Now consider the LID in~\eqref{eqn:dynamic_estimate_state_sde}. Substituting \eqref{eqn:proposition2_2} into the second term on the right-hand side in~\eqref{eqn:dynamic_estimate_state_sde}, we have
			\begin{equation*}
				{X}_t\sum_{i=1}^{d}s_i^\wedge\circ\d{W_{t,i}}={X}_t\sum_{i=1}^{d}s_i^\wedge\d{W_{t,i}}+\frac{1}{2}\sum_{i=1}^{d}\d{{X}_ts_i^\wedge }\d{W_{t,i}}.
			\end{equation*}
			By converting the second term on the right-hand side using~\eqref{eqn:proposition2_3}, we have 
			\begin{equation*}
				\begin{split}
					&\d{\left({{X}}_ts_i^\wedge \right) }\d{W_{t,i}}\\
					=&f_{u_t}(X_t)\d{t}s_i^\wedge\d{W_{t,i}}+{X}_t\left(\sum_{j=1}^{d}s_j^\wedge\d{W_{t,j}}s_i^\wedge\d{W_{t,i}}\right)\\
					=&{X}_t(s_i^\wedge)^2\d{t},
				\end{split}
			\end{equation*}
			where the second equality follows from  $\d{W_{t,i}}\d{W_{t,j}}=\delta_{ij}\d{t}$ and $\d{W_{t,i}}\d{t}=0$ with  $\delta_{ij}$ being a Dirac delta function. The proof for RID is similar.
		\end{proof}
		
		Based on Proposition~\ref{pro:sde_estimate_state}, we can freely convert an SDE between the It\^o and Stratonovich forms. It is worth noting that the term $\frac{1}{2}X_t \sum_{i=1}^d (s_i^{\wedge})^2{\rm d}t$ in the drift, which is often called the ``pinning drift''~\cite{Piggott2016GeometricES}, represents an infinitesimal motion in the normal space that is necessary to keep the stochastic process in $G$. The other two terms on the right-hand side of~\eqref{eqn:sde_ito_left} correspond to infinitesimal drift and diffusion movements in the tangent space.
		
		Our next result shows that the Lie logarithm, denoted as $x_t$, of the state $X_t$ in \eqref{eqn:dynamic_estimate_state_sde} is also a continuous-time Markov process driven by $W_t$. 
		When the group system~\eqref{eqn:dynamic_estimate_state_sde} contains
		a linear group drift term, the underlying stochastic process for $x_t$ is governed by an SDE with a linear drift in the vector space identifying the Lie algebra.
		
		%\Junfeng{Let $\dexpm_{\xi}=\sum_{i=0}^{\infty}\frac{1}{(i+1)!}(\ad_{\xi^\wedge})^{i}$. For a vector $\xi\in\mathbb R^d$, $\dexpm_{\xi}$ is called the left Jacobian of $\exp(\xi)$
			%and $\dexpm_{-\xi}$ the right Jacobian
			%~\cite{barfoot_2017}.}	
		\begin{theorem}[SDE's in the Lie algebra]\label{thm:1}
			Consider~\eqref{eqn:dynamic_estimate_state_sde} with $f_{u_t}(\cdot)$ being a linear group system. The Lie logarithm $
			x_t$ of $X_t$ obeys the following It\^o SDE if $\dexpm({-x_t})$ is invertible,  		
			
			LID:
			\begin{align}\label{eqn:continous_time_sde}
				\d{x_t} &=  \left(A_t x_t-\frac{1}{2}\sum_{k=1}^{d}{\dexp}_{\rm m}^{-1}(-x_t){C}_{t,k}\right)\d{t}+\sum_{k=1}^{d}\gamma_{t,k}\d{W_{t,k}},\notag\\
				{C}_{t,k}&=\sum_{i=1}^{\infty}\frac{(-1)^{i}}{(i+1)!}\sum^{i-1}_{r=0}\adm^{r}_{x}\adm_{\gamma_{t,k}}\adm^{i-1-r}_{x}\gamma_{t,k},\notag\\
				\gamma_{t,k} &\triangleq{\dexp}_{\rm m}^{-1}(-x_t)s_k,
			\end{align}
			
			RID:
			\begin{align}
				\d{x_t} &=  \left(A_t x_t-\frac{1}{2}\sum_{k=1}^{d}{\rm dexp}_{\rm m}^{-1}{(x_t)}{C}_{t,k}\right)\d{t}+\sum_{k=1}^{d}\gamma_{t,k}\d{W_{t,k}},\notag\\
				{C}_{t,k}&=\sum_{i=1}^{\infty}\frac{1}{(i+1)!}\sum^{i-1}_{r=0}\adm^{r}_{x}\adm_{\gamma_{t,k}}\adm^{i-1-r}_{x}\gamma_{t,k},\notag\\
				\gamma_{t,k} &\triangleq{\dexp}_{\rm m}^{-1}(x_t)s_k,
			\end{align}
			where  $A_t$ is defined in~\eqref{eqn:dynamics_xi}.
		\end{theorem}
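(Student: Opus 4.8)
\textbf{Proof proposal for Theorem~\ref{thm:1}.}

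The plan is to push the It\^o SDE for $X_t$ (Proposition~\ref{pro:sde_estimate_state}) through the Lie logarithm map using It\^o's formula, and then identify the resulting drift and diffusion coefficients in terms of the operators $\dexpm$ and $\adm$. First I would work with the LID case; the RID case is entirely analogous with the sign of $x_t$ flipped inside $\adm$, reflecting the left/right convention in $\dexpm$. The starting point is the known differential relation between a curve $X_t$ in $G$ and its logarithm $x_t^\wedge = \log_{\rm m}(X_t)^\wedge$, namely that if $X_t^{-1}\dot X_t = \xi_t^\wedge$ then $\dot x_t = \dexpm^{-1}(-x_t)\,\xi_t^\vee$ (using the paper's overloaded left-Jacobian convention). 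The strategy is to replace the deterministic differential with the Stratonovich differential: since Stratonovich calculus obeys the ordinary chain rule, writing $X_t^{-1}\circ\d X_t = f_{u_t}(X_t)\d t + \sum_i s_i^\wedge\circ\d W_{t,i}$ and projecting via $\log_{\rm m}$ gives, at the Stratonovich level, $\d x_t = \dexpm^{-1}(-x_t)\big( (X_t^{-1}f_{u_t}(X_t))^\vee \d t + \sum_i s_i\circ\d W_{t,i}\big)$. Because $f_{u_t}$ is a linear group system, Proposition~\ref{pro:linear_system_property}(iii) tells us $\dexpm^{-1}(-x_t)(X_t^{-1}f_{u_t}(X_t))^\vee = A_t x_t$ exactly, so the drift contribution from $f_{u_t}$ is precisely the linear term $A_t x_t$, and the only remaining work is to convert the noise part from Stratonovich to It\^o.

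The core computation is therefore the Stratonovich-to-It\^o correction for the term $\sum_i \dexpm^{-1}(-x_t)s_i\circ\d W_{t,i} = \sum_i \gamma_{t,i}\circ\d W_{t,i}$. Writing $\gamma_{t,k} = \dexpm^{-1}(-x_t)s_k$, the correction is $\frac12\sum_k \d\gamma_{t,k}\,\d W_{t,k}$, and since the only source of quadratic variation matching $\d W_{t,k}$ comes from the $\d W_{t,k}$ appearing in $\d x_t$, I would need $\partial_{x}\gamma_{t,k}$ contracted against $\gamma_{t,k}$. Concretely, expand $\dexpm^{-1}(-x) = \sum_{i=0}^\infty \frac{(-1)^i B_i}{i!}\adm_{-x}^{\,i}$ (or directly use the series $\sum \frac{(-1)^i}{(i+1)!}\adm_x^i$ coming from $\dexp_x = \sum \frac{1}{(i+1)!}\ad_x^i$ with $x\to -x$), differentiate term by term in $x$ along the direction $\gamma_{t,k}$, and use the Leibniz rule for $\adm_x^i$ — this is exactly where the double sum $\sum_{r=0}^{i-1}\adm_x^r\adm_{\gamma_{t,k}}\adm_x^{i-1-r}$ is generated, acting on the fixed vector $\gamma_{t,k}$. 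Collecting factors gives $C_{t,k} = \sum_{i=1}^\infty \frac{(-1)^i}{(i+1)!}\sum_{r=0}^{i-1}\adm_x^r\adm_{\gamma_{t,k}}\adm_x^{i-1-r}\gamma_{t,k}$, and the It\^o correction to the drift is $-\frac12\sum_k \dexpm^{-1}(-x_t)C_{t,k}$, as claimed. The $\d W_{t,i}\,\d W_{t,j}=\delta_{ij}\d t$ rule collapses the would-be double sum over noise indices to the single sum over $k$.

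The main obstacle I expect is bookkeeping rather than conceptual: correctly differentiating the operator-valued power series $x\mapsto\dexpm^{-1}(-x)$ in a chosen direction and keeping track of where $\adm$ is evaluated at $x_t$ versus at $\gamma_{t,k}$, since $\adm$ is bilinear but $\adm_x^i$ applied to $\gamma$ is not symmetric in the two arguments — the noncommutativity forces the explicit ordered sum $\sum_r \adm_x^r\adm_{\gamma}\adm_x^{i-1-r}$. A secondary technical point is justifying term-by-term differentiation and the interchange of the infinite sum with the stochastic differential; I would handle this by invoking local boundedness of $x_t$ (so the series converge absolutely and uniformly on the relevant compact sets) together with the standing semimartingale assumption, and by noting that the invertibility hypothesis on $\dexpm(-x_t)$ guarantees $x_t$ stays in the domain where $\log_{\rm m}$ and $\dexpm^{-1}$ are smooth. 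One should also double-check that no additional pinning-type drift survives: the $\frac12 X_t\sum_i(s_i^\wedge)^2\d t$ term of~\eqref{eqn:sde_ito_left} is the It\^o form in the group, but since we pass through $\log_{\rm m}$ at the \emph{Stratonovich} level and only convert to It\^o afterward, that term is automatically accounted for inside the $C_{t,k}$ correction and must not be added a second time.
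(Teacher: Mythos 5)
Your proposal is correct, and it reaches the stated drift and diffusion by a genuinely different route from the paper's. The paper argues forwards: it posits an It\^o SDE $\d{x_t}=F_t\,\d{t}+H_t\,\d{W_t}$ with undetermined coefficients, pushes it through $\expm$ via It\^o's lemma — which forces the computation of the second-derivative term $\sum_{i,j}H_{t,jk}H_{t,ik}\partial^2_{x_{ji}}X$ (their Lemma~\ref{lemma:calculate_support_lemma}, the most laborious step) — and then matches coefficients against the group-level It\^o SDE of Proposition~\ref{pro:sde_estimate_state}; in that matching the term $X_t(\dexp_{-x_t}\gamma_{t,k}^\wedge)^2$ cancels exactly against the pinning drift $\tfrac12X_t\sum_i(s_i^\wedge)^2$, leaving only $C_{t,k}$. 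You instead project to the Lie algebra at the Stratonovich level using the ordinary chain rule, exploit the identity $\dexpm^{-1}(-x_t)\bigl(X_t^{-1}f_{u_t}(X_t)\bigr)^\vee=A_tx_t$ from Proposition~\ref{pro:linear_system_property}(iii) for the drift, and then do a standard Euclidean Stratonovich-to-It\^o conversion in $\mathbb{R}^d$. Your route buys a cleaner argument: only first derivatives of the Jacobian series are needed, and the absence of a separate pinning contribution is automatic rather than a cancellation to verify. The cost is that you must justify the Stratonovich chain rule through $\log_{\rm m}$ and resolve the mild self-reference of computing the quadratic covariation of $\gamma_{t,k}$ with $W_{t,k}$ before the It\^o form is known (resolved, as you note, because the martingale part of $x_t$ is already fixed by the Stratonovich equation). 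One step you should make explicit: your $C_{t,k}$ is the directional derivative of $\dexpm(-x)$ — not of its inverse — applied to $\gamma_{t,k}$, so the prefactor $-\dexpm^{-1}(-x_t)$ in the correction must come from the inverse-derivative rule $D(M^{-1})=-M^{-1}(DM)M^{-1}$ together with $\dexpm^{-1}(-x_t)s_k=\gamma_{t,k}$; you assert the final form without this line. Once added, your drift agrees exactly with~\eqref{eqn:continous_time_sde}.
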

			The full proof is given in Appendix~\ref{apx:SDE_of_Lie_logarithm}. The rationale is to use It\^o lemma to calculate each coefficient of the SDE.
   
		 Note that in Theorem  the Jacobian $\dexpm({-x})$ can be not invertible for some $x\in\mathbb R^d$. For example, when we consider the group $SO(3)$, there exist singularities for $\dexpm({-x})$ at
		$x$ with 
		$\|x\|=2k\pi$ for some nonzero integer $k$.
For a group trajectory $X_t$, we have that
			$$
			\dot{X}_t=X_t\dexp_{-x_t}(\dot{x}_t^\wedge).
			$$
The condition that the Jacobian $\dexpm({-x_t})$ is invertible validates the well-definiteness of
 ${\rm d}x_t$ in the Lie algebra.

		% Let $\xi_{t}\triangleq\log{(X)}$ denote the Lie logarithm and also satisfy some SDE in the vector space.

		\section{Differential Equations of Error Dynamics in the Affine Group System}\label{sec:sde}
		
		%%Error dynamics 
		In this section, we will examine the error dynamics in the matrix Lie group $G$, which plays a crucial role in observer design. To this end, we will investigate the evolution of the \textit{transformed errors}, defined as the Lie logarithm of the invariant errors that represent the group difference between two distinct trajectories governed by the same system rule, but initiated and perturbed differently.
		Our analysis takes into account external disturbances, which may be smooth disturbances or random noises.
		
		Consider two distinct trajectories $X_t\in G$ and $\hat{X}_t\in G$. We define two kinds of errors between the two trajectories as follows.
		\begin{definition}[Invariant Errors]~\label{def:invariant_error}
			The left- and right-invariant errors $\eta_t\in G$ of the dynamics $ X_t$ and 
			$\hat { X}_t$ are respectively defined  as follows:
			\begin{align}\label{eqn:invariant_error}
				\begin{split}
					{\eta}^{\text{L}}_t &\triangleq {\hat{X}}_t^{-1}{X}_t \quad \text{(left-invariant error~(LIE))},\\
					{\eta}^{\text{R}}_t &\triangleq {X}_t\hat{X}_t^{-1} \quad \text{(right-invariant error~(RIE))}.
				\end{split}
			\end{align}
		\end{definition}
		
These errors can also be seen as group analogs of errors in a vector space.   However, such the error representation in the matrix Lie group is not straightforward to compute. The Lie logarithm of the invariant errors in the vector space has been studied in observer design and  uncertainty propagation analysis in the Lie group. For instance, it is used in the InEKF proposed by Barrau and Bonnabel~\cite{Barrau}, and in the analysis of the concentrated Gaussian distribution evolution presented in Wang and Chirikjian's work~\cite{Wang2006ErrorPO}.

		\subsection{ODE of the Invariant Error Dynamics under Smooth Disturbance}
		We first consider a smooth disturbance $w_t\in\mathbb{R}^d$, {which is assumed to be differentiable}. We use  ${{X}}_tw_t^\wedge$ or  $w_t^\wedge{{X}}_t$ to represent the disturbance applied in the tangent space of $X_t$:
		\begin{align}
			\dot{{X}}_{t}&=f_{u_t}({{X}}_{t})+{{X}}_tw_t^\wedge, \label{eqn:continuous_noisy_model_right}\\	\dot{{X}}_{t}&=f_{u_t}({{X}}_{t})+w_t^\wedge{{X}}_t 
			\quad \text{with}\quad {X}_{t_0} ={X}_0 \label{eqn:continuous_noisy_model_left}.
		\end{align}
		
		%		\begin{remark}
			%			A noisy system based on right multiplication generally implies that this processing noise is generated in the body-fixed frame, whereas that based on left multiplication implies that this processing noise is generated in the global frame. These forms are introduces in some works of literature \cite{Barrau2018CDC}.
			%		\end{remark}
		%		
		We only consider the system outlined in~\eqref{eqn:continuous_noisy_model_right}, as the same approach can be applied to the other. Consider another disturbance-free trajectory, denoted as $\hat{X}_t$, governed by the same vector field, $f_{u_t}$, but probably initialized differently:
		\begin{equation}\label{eqn:dynamics_noisy_X}
			\dot{{\hat{X}}}_t= f_{u_t}(\hat{X}_t).
		\end{equation}
		Using \eqref{eqn:continuous_noisy_model_right} and \eqref{eqn:dynamics_noisy_X}, the dynamics of $\eta^{\text{L}}$ and $\eta^{\text{R}}$  can be derived as follows:
			\begin{equation}\label{eqn:dynamic_invariant_error}
				\begin{split}
					\dot{{\eta}}^{\text{L}}_t&=f_{u_t}(\eta^{\text{L}}_t)-f_{u_t}({I})\eta^{\text{L}}_t+\eta^{\text{L}}_tw_t^\wedge \quad\text{(LIE)},\\
					\dot{{\eta}}^{\text{R}}_t&=f_{u_t}(\eta^{\text{R}}_t)-\eta^{\text{R}}_tf_{u_t}({I})+{\eta^{\text{R}}_t} \Ad_{\hat{{X}}}w_t^\wedge\quad\text{(RIE)}.
				\end{split}
			\end{equation}
			
		{Thanks to Proposition~\ref{pro:linear_and_affine}, it is easy to see that~\eqref{eqn:dynamic_invariant_error} is a linear group system when~$w^\wedge_t\equiv0$. Therefore, \eqref{eqn:dynamic_invariant_error} can be regarded as a linear group system with an additional disturbance term.}
\begin{remark}
The invariant error system~\eqref{eqn:dynamic_invariant_error} exhibits a remarkable property in that its behavior is decoupled from $X_t$ of the underlying system, instead just being determined by $u_t$ and $\hat{X}_t$. This characteristic confers an advantage in maintaining symmetry during the observer design process, which has been investigated in the works of Barrau and Bonnabel \cite{Barrau, Bonnabel2006SymmetryPreservingO}.
\end{remark}

		Let $\xi_t^L, \xi_t^R\in\mathbb{R}^{d}$ denote Lie logarithm of the left-invariant and right-invariant error, i.e., the transformed error, which writes $\xi_t^L\triangleq\log_{\rm m}(\eta_t^L)$ and $\xi_t^R\triangleq\log_{\rm m}(\eta_t^R)$. In the sequel, we will drop the superscript from the two notations when there is no need to specify which error we are working with.
		Next, we describe our findings regarding the dynamics of the $\xi_t$ that evolves in the Lie algebra.
		\begin{theorem}[{Error dynamics with smooth disturbance}]\label{thm: log_linear_specific_model}
			We consider~\eqref{eqn:dynamic_invariant_error} and assume that $\dexpm(-\xi_t)$ is invertible during $t\in[t_0,T)$, where $T\in(t_0,\infty]$. Then, the dynamics of $\xi_t$ is given by
			\begin{align}
				\dot{\xi}^{\text{L}}_t&=\dexpm^{-1}(-\xi_t^{\text{L}})w_t+A_t\xi^{\text{L}}_t\quad \text{(LIE)},\\
				\dot{\xi}^{\text{R}}_t&=\dexpm^{-1}(-\xi_t^{\text{R}})\Ad_{\hat{{X}}_t}w_t+A_t\xi^{\text{R}}_t\quad \text{(RIE)},
			\end{align}
			where  $A_t$ is defined in~\eqref{eqn:dynamics_xi}.
		\end{theorem}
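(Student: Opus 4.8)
The plan is to reduce the statement to two facts already available: the left-trivialized velocity identity $\dot{X}_t=X_t\,\dexp_{-x_t}(\dot{x}_t^\wedge)$ recalled after Theorem~\ref{thm:1}, and the assertion of Proposition~\ref{pro:linear_system_property}$(iii)$, combined with Lemma~\ref{lemma:invariant error dynamic}, that the \emph{disturbance-free} invariant error (whose drift is a linear group system) has Lie logarithm obeying $\dot{\xi}_t=A_t\xi_t$ exactly. I will carry out the LIE case; the RIE case is identical once $w_t$ is replaced by $\Ad_{\hat{X}_t}w_t$ and the linear drift $\eta\mapsto f_{u_t}(\eta)-f_{u_t}(I)\eta$ by $\eta\mapsto f_{u_t}(\eta)-\eta f_{u_t}(I)$, both of which are linear group systems by a left/right version of Proposition~\ref{pro:linear_and_affine}$(iv)$. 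Throughout, $\xi^{\rm L}_t=\log_{\rm m}(\eta^{\rm L}_t)$ is well defined and differentiable on $[t_0,T)$ precisely because $\dexpm(-\xi^{\rm L}_t)$ is assumed invertible there.

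The first step will be a \emph{pointwise} identity for a linear group drift. Writing $g_{u_t}(\eta)\triangleq f_{u_t}(\eta)-f_{u_t}(I)\eta$, which is a linear group system by Lemma~\ref{lemma:invariant error dynamic}, the key sub-claim is that $\bigl(\eta^{-1}g_{u_t}(\eta)\bigr)^\vee=\dexpm(-\xi)\,A_t\xi$ for every $\eta$ in the neighbourhood of the identity on which $\log_{\rm m}$ is a diffeomorphism, with $\xi=\log_{\rm m}(\eta)$. To establish it I would fix $t$ and run the auxiliary trajectory $Z_s$ solving $\dot{Z}_s=g_{u_s}(Z_s)$ with $Z_t=\eta$. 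By Proposition~\ref{pro:linear_system_property}$(iii)$ its Lie logarithm satisfies $\frac{\d{}}{\d{s}}\log_{\rm m}(Z_s)=A_s\log_{\rm m}(Z_s)$, hence $\frac{\d{}}{\d{s}}\log_{\rm m}(Z_s)\big|_{s=t}=A_t\xi$; plugging this into the velocity identity at $s=t$ gives $\dot{Z}_t=Z_t\,\dexp_{-\xi}\bigl((A_t\xi)^\wedge\bigr)$, i.e.\ $\bigl(\eta^{-1}g_{u_t}(\eta)\bigr)^\vee=\dexpm(-\xi)A_t\xi$, as claimed. (Here $A_t$ is the operator attached by Proposition~\ref{pro:linear_system_property} to the linear drift of the error dynamics; for the RIE one obtains the corresponding $A_t$, and I use the single symbol $A_t$ for whichever error is under discussion.)

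The second step is to re-insert the disturbance. The full LIE dynamics of Lemma~\ref{lemma:invariant error dynamic} reads $\dot{\eta}^{\rm L}_t=g_{u_t}(\eta^{\rm L}_t)+\eta^{\rm L}_tw_t^\wedge$, so $(\eta^{\rm L}_t)^{-1}\dot{\eta}^{\rm L}_t=(\eta^{\rm L}_t)^{-1}g_{u_t}(\eta^{\rm L}_t)+w_t^\wedge$. Applying the velocity identity to the left-hand side and Step~1 to the first term on the right yields $\dexpm(-\xi^{\rm L}_t)\,\dot{\xi}^{\rm L}_t=\dexpm(-\xi^{\rm L}_t)\,A_t\xi^{\rm L}_t+w_t$; left-multiplying by $\dexpm^{-1}(-\xi^{\rm L}_t)$, which exists by hypothesis on $[t_0,T)$, gives exactly $\dot{\xi}^{\rm L}_t=A_t\xi^{\rm L}_t+\dexpm^{-1}(-\xi^{\rm L}_t)w_t$. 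The RIE identity follows verbatim, carrying the extra factor $\Ad_{\hat{X}_t}$ through the disturbance term.

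I expect the only non-routine point to be Step~1: Proposition~\ref{pro:linear_system_property}$(iii)$ is a statement about \emph{trajectories}, while what is needed is an \emph{algebraic} identity for the vector field $g_{u_t}$ at an arbitrary group element; the device of freezing the time variable and flowing the disturbance-free system through that element converts the former into the latter. A small amount of care is also needed to confirm that the instantaneous relation between $\dot{\xi}_t$ and $(\eta_t^{-1}\dot{\eta}_t)^\vee$ depends only on the current value of $\xi_t$ through $\dexpm(-\xi_t)$ — and not on the history of the perturbed trajectory — which is precisely what licenses substituting the identity derived along the disturbance-free flow into the disturbed equation; and one must keep the left/right conventions for the two invariant errors consistent when identifying $A_t$.
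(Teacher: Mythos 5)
Your proposal is correct and follows essentially the same route as the paper's proof: both use the identity $(\eta_t^{-1}\dot{\eta}_t)^\vee=\dexpm(-\xi_t)\dot{\xi}_t$, isolate the linear-group drift $\eta\mapsto f_{u_t}(\eta)-f_{u_t}(I)\eta$ via Proposition~\ref{pro:linear_and_affine}$(iv)$, and identify its pullback to the Lie algebra with $A_t\xi$ by running the disturbance-free error flow through the current point and invoking Proposition~\ref{pro:linear_system_property}$(iii)$. Your Step~1 merely makes explicit the "freeze $t$ and flow through $\eta$" device that the paper's proof uses implicitly.
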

		\begin{proof}
			We only prove for the left-invariant error case. The superscript $L$ is omitted from $\eta_t^L$ and $\xi_t^L$. By
			Theorem 5 of \cite{Hunacek2008}, we have that 
			$
			(\eta^{-1}_t\dot{\eta_t})^\vee= \dexpm(-\xi_t)\dot{\xi_t}.
			$
			Since $\dexpm(-\xi_t)$ is bijective by assumption,       
			\begin{equation*}
				\begin{split}
					\dot{\xi_t} &= \dexp^{-1}_{\rm m}(-\xi_t)	\left( {\eta_t}^{-1}f_{u_t}(\eta_t)-{\eta_t}^{-1}f_{u_t}({I})\eta_t+w_t^\wedge\right)^\vee\\
					&\triangleq g_{u_t}(\xi_t)+\dexp^{-1}_{\rm m}(-\xi_t)w_t.
				\end{split}
			\end{equation*}
			Next we will show that $g_{u_t}$ is a linear mapping. 
			Consider a trajectory $\eta_t'\in G$ with dynamics $\dot{\eta_t}'=f_{u_t}(\eta_t')-f_{u_t}(I)\eta_t'$ and its logarithm $\xi'_t$. 
			Based on the derivative of the exponential, the dynamics of ${\xi}_t'$ writes
			\begin{equation*}
				\dot{\xi_t'}=\dexpm^{-1}{(-\xi'_{t})}(\eta'^{-1}_t\dot{\eta'}_t)^\vee=g_{u_t}(\xi'_t).
			\end{equation*}  
			By~Proposition~\ref{pro:linear_and_affine},  $\xi'_t$ follows dynamics
			$\dot{\xi'_t}={A}_t\xi'_t$. Therefore,
			we conclude that $g_{u_t}(\xi_t')={A}_t\xi_t'$,
			which completes the proof.
		\end{proof}
		
		\subsection{SDE's of the Invariant Error Dynamics under Brownian Motion}
		
		Consider $X_t$ and  $\hat{X}_t$ generated by~\eqref{eqn:sde_ito_left}\footnote{We only focus on the LID case in~\eqref{eqn:sde_ito_left} as the results of the other case
			can be derived similarly.} and~\eqref{eqn:continous_time_dynamics}, respectively, which write
\begin{equation}\label{eqn:estimator}
			\begin{split}
				\d{X}_t &= \left( f_{u_t}({{X}}_t)+\frac{1}{2}{{X}}_t\sum_{k=1}^{d}(s_k^\wedge)^2\right) \d{t}+{{X}}_t\sum_{k=1}^{d}s_k^\wedge \d{W_{t,k}},\\
				\d{\hat{X}}_t&=f_{u_t}(\hat{X}_t) \d{t}.
			\end{split}
		\end{equation} 
		\begin{lemma}\label{lemma:invariant error dynamics}
			The invariant errors $\eta_t$ of~\eqref{eqn:estimator} satisfies the following SDE:
			\begin{equation}\label{eqn:sde_invariant_error}
				\begin{split}
					\d{\eta_t}^{\text{L}}&=  g_{u_t}(\eta^{\text{L}}_t)\d{t}+\eta^{\text{L}}_t\sum_{k=1}^{d}\left(\frac{1}{2}(s_k^\wedge)^2 \d{t}+s_k^\wedge d{W_{t,k}}\right),\\
					\d{\eta_t}^{\text{R}}&=  g_{u_t}(\eta^{\text{R}}_t)\d{t}+\eta^{\text{R}}_t\Ad_{\hat{X}_t}\sum_{k=1}^{d}\left(\frac{1}{2}(s_k^\wedge)^2\d{t}+s_k^\wedge d{W_{t,k}}\right).
				\end{split}
			\end{equation} 
		\end{lemma}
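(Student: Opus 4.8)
The plan is to apply It\^o's product rule for matrix-valued semimartingales directly, exploiting that the reference trajectory $\hat X_t$ is deterministic. I will write $\eta^{\text{L}}_t=\hat X_t^{-1}X_t$ and $\eta^{\text{R}}_t=X_t\hat X_t^{-1}$, carry out the LIE case in detail, and only indicate the changes for the RIE. Throughout, $g_{u_t}$ denotes the drift field of the deterministic invariant-error dynamics of Lemma~\ref{lemma:invariant error dynamic}, i.e.\ $g_{u_t}(\eta)=f_{u_t}(\eta)-f_{u_t}(I)\eta$ in the LIE case and $g_{u_t}(\eta)=f_{u_t}(\eta)-\eta f_{u_t}(I)$ in the RIE case, and $f_{u_t}$ is assumed to be an affine group system, as in the rest of the section.

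First I would record that, since $\hat X_t$ solves the ODE~\eqref{eqn:dynamics_noisy_X}, the process $\hat X_t^{-1}$ is deterministic and $C^1$, with $\d{(\hat X_t^{-1})}=-\hat X_t^{-1}f_{u_t}(\hat X_t)\hat X_t^{-1}\d{t}$; being of finite variation, it has vanishing quadratic covariation with the It\^o process $X_t$, so no extra It\^o correction arises from it. Applying the product rule to $\eta^{\text{L}}_t=\hat X_t^{-1}X_t$, substituting the It\^o SDE~\eqref{eqn:estimator} for $\d{X_t}$, and using $\hat X_t^{-1}X_t=\eta^{\text{L}}_t$ gives $\d{\eta^{\text{L}}_t}=\big(-\hat X_t^{-1}f_{u_t}(\hat X_t)\hat X_t^{-1}X_t+\hat X_t^{-1}f_{u_t}(X_t)\big)\d{t}+\tfrac12\eta^{\text{L}}_t\sum_{i}(s_i^\wedge)^2\d{t}+\eta^{\text{L}}_t\sum_i s_i^\wedge\d{W_{t,i}}$. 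The bracketed drift is precisely the quantity that, in the proof of Lemma~\ref{lemma:invariant error dynamic}, collapses under the affine property~\eqref{eqn:affine_system_def} — applied both to $\hat X_t^{-1}X_t$ and, through $f_{u_t}(I)=f_{u_t}(\hat X_t\hat X_t^{-1})$, to eliminate $f_{u_t}(\hat X_t^{-1})$ — to $f_{u_t}(\eta^{\text{L}}_t)-f_{u_t}(I)\eta^{\text{L}}_t=g_{u_t}(\eta^{\text{L}}_t)$, which is the claimed LIE equation.

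For the RIE one repeats the product rule for $\eta^{\text{R}}_t=X_t\hat X_t^{-1}$. The drift contributions $f_{u_t}(X_t)\hat X_t^{-1}-X_t\hat X_t^{-1}f_{u_t}(\hat X_t)\hat X_t^{-1}$ collapse, by the same affine-property algebra, to $g_{u_t}(\eta^{\text{R}}_t)=f_{u_t}(\eta^{\text{R}}_t)-\eta^{\text{R}}_t f_{u_t}(I)$, while the diffusion and pinning terms — originally multiplied by $\hat X_t^{-1}$ on the right — are re-expressed by inserting $\hat X_t^{-1}\hat X_t$ and using that conjugation by $\hat X_t$ is multiplicative: $\hat X_t s_i^\wedge\hat X_t^{-1}=\Ad_{\hat X_t}s_i^\wedge$ and $\hat X_t\big(\sum_i(s_i^\wedge)^2\big)\hat X_t^{-1}=\sum_i(\Ad_{\hat X_t}s_i^\wedge)^2=\Ad_{\hat X_t}\sum_i(s_i^\wedge)^2$. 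This regroups the stochastic part as $\eta^{\text{R}}_t\Ad_{\hat X_t}\sum_i\big(\tfrac12(s_i^\wedge)^2\d{t}+s_i^\wedge\d{W_{t,i}}\big)$, which gives the RIE equation.

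The computation is essentially routine; the points that need care are (i) that no spurious It\^o correction is created by $\hat X_t^{-1}$, which is guaranteed by its finite variation, and (ii) that the pinning drift $\tfrac12 X_t\sum_i(s_i^\wedge)^2$ transforms consistently under the similarity maps defining the errors — immediate for the LIE, and for the RIE resting on the multiplicativity of conjugation by $\hat X_t$. As a sanity check, setting all $s_i=0$ must reduce the two displays to the deterministic invariant-error dynamics~\eqref{eqn:dynamic_invariant_error} with $w_t=0$; equivalently, one may first convert~\eqref{eqn:estimator} to its Stratonovich form via Proposition~\ref{pro:sde_estimate_state}, apply the ordinary Leibniz rule, and convert back, arriving at the same equations.
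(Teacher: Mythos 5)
Your proposal is correct and follows essentially the same route as the paper's proof: apply the matrix product rule to $\hat X_t^{-1}X_t$ (resp.\ $X_t\hat X_t^{-1}$), use $\d{(\hat X_t^{-1})}=-\hat X_t^{-1}\d{(\hat X_t)}\hat X_t^{-1}$ with no cross quadratic-variation term since $\hat X_t$ is deterministic, substitute the It\^o SDE~\eqref{eqn:sde_ito_left}, and collapse the drift via the affine property to $g_{u_t}(\eta_t)$. You merely spell out the details (finite variation of $\hat X_t^{-1}$, the $\Ad_{\hat X_t}$ regrouping for the RIE) that the paper leaves implicit.
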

		\begin{proof}
			We derive the SDE of the left-invariant error, and for the right-invariant one it follows similarly:
			\begin{equation*}
				\begin{split}
					\d{\eta}^{\text{L}}_t&=\d{(\hat{X}^{-1}_t{X}_t )}=\d{ (\hat{X}^{-1}_t )}{X}_t+\hat{X}^{-1}_t\d{ ({X}_t) }\\
					&=- \hat{X}^{-1}_t\d{ ( \hat{X}_t) } \hat{X}^{-1}_t{X}_t+\hat{X}^{-1}_t\d(X_t) \\
					&=  g_{u_t}(\eta^{\text{L}}_t)\d{t}+\eta^{\text{L}}_t\sum_{k=1}^{d}\left(\frac{1}{2}(s_k^\wedge)^2 \d{t}+s_k^\wedge d{W_{t,k}}\right),
				\end{split}
			\end{equation*}
			where the last equality holds due to~\eqref{eqn:sde_ito_left}, which completes the proof.
		\end{proof}
		From Lemma~\ref{lemma:invariant error dynamics}, the diffusions of ${\eta}^{\text{L}}_t$ and
		${\eta}^{\text{R}}_t$ can be seemed as the left-invariant stochastic processes in linear group system. By
	 Theorem~\ref{thm:1}, we can show that the dynamics of 
		the transformed error $\xi_t$ is characterized by an SDE too, which is detailed in the following theorem.
		\begin{theorem}[{Error dynamics with nondifferentiable noise}]\label{thm:SDE_of_Lie_logarithm}
		 The Lie logarithm of the invariant errors $\xi_t$ for \eqref{eqn:estimator} obeys an SDE if
		 $\dexpm(-\xi_t)$ is invertible, 
\begin{equation}\label{eqn:therem_3}
				\d{\xi_t} =  \left(A_t \xi_t-\frac{1}{2}\sum_{k=1}^{d}\dexpm^{-1}{(-\xi_{t})}{C}_{t,k}\right)\d{t}+\sum_{k=1}^{d}\gamma_{t,k}\d{W_{t,k}}
			\end{equation}
			where 
			\begin{align*}		{C}_{t,k}&=\sum_{i=1}^{\infty}\frac{(-1)^{i}}{(i+1)!}\sum^{i-1}_{r=0}\adm^{r}_{\xi}\adm_{\gamma_{t,k}}\adm^{i-1-r}_{\xi}\gamma_{t,k}\end{align*} 
			$\gamma_{t,k}\triangleq \dexpm^{-1}{(-\xi^{\text{L}}_{t})}s_k$ for LIE, $\gamma_{t,k} \triangleq \dexpm^{-1}{(-\xi^{\text{R}}_{t})}\Ad_{\hat{X}_t}s_k$ for RIE.
		\end{theorem}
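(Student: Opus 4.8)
My plan is to reduce the statement to a direct application of Theorem~\ref{thm:1}. First I would observe that, by Lemma~\ref{lemma:invariant error dynamics}, the left-invariant error $\eta^{\text{L}}_t$ of~\eqref{eqn:estimator} satisfies, in It\^o form, exactly the LID SDE~\eqref{eqn:sde_ito_left} with the drift $f_{u_t}$ replaced by $g_{u_t}(\cdot)=f_{u_t}(\cdot)-f_{u_t}(I)(\cdot)$ (equivalently, the Stratonovich LID SDE~\eqref{eqn:dynamic_estimate_state_sde} with $f_{u_t}\to g_{u_t}$), with the same noise directions $s_1,\dots,s_d$. Since $f_{u_t}$ is an affine group system, Proposition~\ref{pro:linear_and_affine}~\eqref{subpro:linear group system} (in the ``other system'' form proved there) guarantees that $g_{u_t}$ is a linear group system, so the hypotheses of Theorem~\ref{thm:1} are met. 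Invoking~\eqref{eqn:continous_time_sde} of Theorem~\ref{thm:1} with $x_t$ taken to be $\xi^{\text{L}}_t=\log_{\rm m}(\eta^{\text{L}}_t)$ then gives~\eqref{eqn:therem_3} directly, with $\gamma_{t,k}=\dexpm^{-1}(-\xi^{\text{L}}_t)s_k$ and the stated $C_{t,k}$; the linear part of the drift is $A_t\xi^{\text{L}}_t$, where $A_t$ is the matrix attached to the linear group system $g_{u_t}$, i.e.\ the same $A_t$ identified in Theorem~\ref{thm: log_linear_specific_model}.

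For the right-invariant error I would first rewrite the SDE of Lemma~\ref{lemma:invariant error dynamics} using $\Ad_{\hat X_t}(s_i^\wedge)=(\Ad_{\hat X_t}s_i)^\wedge$ and $\Ad_{\hat X_t}\big((s_i^\wedge)^2\big)=\big((\Ad_{\hat X_t}s_i)^\wedge\big)^2$, which puts $\eta^{\text{R}}_t$ again in the form~\eqref{eqn:sde_ito_left} with $f_{u_t}\to g_{u_t}$, but now with the \emph{time-varying} noise directions $\bar s_{t,i}\triangleq\Ad_{\hat X_t}s_i$ in place of the fixed $s_i$. Because $\Ad_{\hat X_t}$ is a linear isomorphism of $\mathfrak g$, the $\bar s^{\wedge}_{t,i}$ still form a basis, and the derivation of Theorem~\ref{thm:1} uses only that $\{s_i^\wedge\}$ is a basis together with the rule $\d W_{t,i}\,\d W_{t,j}=\delta_{ij}\,\d t$, never orthonormality. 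So it suffices to check that Theorem~\ref{thm:1} continues to hold when the noise directions are allowed to vary deterministically in time; granting this, we obtain~\eqref{eqn:therem_3} for $\xi^{\text{R}}_t=\log_{\rm m}(\eta^{\text{R}}_t)$ with $\gamma_{t,k}=\dexpm^{-1}(-\xi^{\text{R}}_t)\Ad_{\hat X_t}s_k$ and the stated $C_{t,k}$.

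I expect this last check to be the only genuine obstacle. It is handled by noting that, in the It\^o-lemma computation that proves Theorem~\ref{thm:1}, the only new terms relative to the constant-$s_i$ setting come from the differential $\d\bar s_{t,i}=\d(\Ad_{\hat X_t}s_i)$; and since $\hat X_t$ solves the deterministic ODE $\dot{\hat X}_t=f_{u_t}(\hat X_t)$, the path $t\mapsto\Ad_{\hat X_t}s_i$ is absolutely continuous of finite variation, so its martingale part vanishes and hence $\d\bar s_{t,i}\,\d W_{t,j}=0$, $\d\bar s_{t,i}\,\d\bar s_{t,j}=0$, and $\d\bar s_{t,i}\,\d t=0$. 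Every quadratic-(co)variation contribution that would involve $\d\bar s_{t,i}$ therefore drops out, and the computation proceeds verbatim with $s_k$ replaced throughout by $\bar s_{t,k}=\Ad_{\hat X_t}s_k$. Finally, as in Theorem~\ref{thm:1}, the assumption that $\dexpm(-\xi_t)$ is invertible is precisely what makes $\d\xi_t$ well defined in the Lie algebra, so no extra work is needed there.
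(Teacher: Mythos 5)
Your proposal is correct and takes essentially the same route as the paper, whose entire proof of this theorem is the one-line reduction ``the result follows from Theorem~\ref{thm:1}'' after Lemma~\ref{lemma:invariant error dynamics} puts both invariant errors in LID form with the linear group drift $g_{u_t}$. The one place you go beyond the paper is the right-invariant case: the paper silently applies Theorem~\ref{thm:1} with the time-varying noise directions $\Ad_{\hat X_t}s_k$, and your observation that this is legitimate because $t\mapsto\Ad_{\hat X_t}s_k$ is deterministic and of finite variation (so all quadratic covariations involving its differential vanish and the coefficient-matching argument of Appendix~\ref{apx:SDE_of_Lie_logarithm} goes through verbatim) is precisely the justification the paper omits.
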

		
		%%briefly give some introduction for the invariant error extended kalman filter using this error dynamics.
		\begin{remark}
			In ~\cite{Barrau},  Barrau and Bonnabel proposed an approximate error dynamics for studying the stability of the estimator. The approximation was obtained by neglecting higher-order terms of $\xi_t$ in $\dexpm{(-\xi_t)}$ of LIE of \eqref{eqn:therem_3}, which can be written into
				\begin{equation}~\label{eqn:linear approximation}
					\mathrm{d}\xi_t = A_t\xi_t\d{t}+\sum_{k=1}^{d}s^\wedge_k\mathrm{d}W_{t,k}.
				\end{equation}   
		\end{remark}
		
		\subsection{Examples on $SE_2(3)$ }
  {
		The matrix group $SE_2(3)$ is frequently employed in industrial settings, for instance in navigation with inertial measurement units~(IMU) as described in \cite{Barrau2020}.   
		A state $X_t\in SE_2(3)$ consists of a rotation matrix ${R}\in SO(3)$ and $2$ vectors $t_1, t_2\in\mathbb R^3$, given by the form:
		$\begin{bmatrix}
			{R}&\vline& {t}_1& {t}_2\\
			\hline
			{0} &\vline	&
			\begin{matrix}
				I
			\end{matrix}
		\end{bmatrix}.
		$ 		
		The matrix representation of the corresponding Lie algebra $\mathfrak{se}_2(3)$ is given by
		$
\begin{bmatrix}
\omega^\wedge &\vline& {v}_1& {v}_2\\
			\hline
			{0} &\vline	&
			\begin{matrix}
				0
			\end{matrix}
		\end{bmatrix}.$ The kinematic system of the IMU with $X_t\in SE_2(3)$ is modeled as
\begin{align*}
    \d{X}_t = (X_tv^\wedge_b+v^\wedge_gX_t+f_0(X_t))\d{t}+{{X}}_t\sum_{k=1}^{d}s_k^\wedge \d{W_{t,k}}
\end{align*}
  as \cite{li2022closed} has done for IMU and $f_0()$ denote a linear group system which represents $\dot{t}_1=t_2$. In addition, we also consider an estimator without random noise.	Let $\xi$ denote the transformed error between the estimator and the state. We can use \eqref{eqn:therem_3} to provide the error dynamics and provide $C_{t,k}$ and $\dexpm{(-\xi_{t})}$ in the following.}
\setcounter{equation}{24}
\begin{figure*}[t]
\begin{equation}\label{eqn:ck}
\begin{split}
    C_{t,k}&=\sum^4_{j=1}\beta'_{t,j}\theta_t'\adm^{j}_{\xi_t}\gamma_{t,k}+\beta_j\sum^{j-1}_{r=0}\adm^{r}_{\xi_t}\adm_{\gamma_{t,k}}\adm^{j-1-r}_{\xi_t},\quad\theta_t'\triangleq\begin{bmatrix}
    \omega_t^\top&0&0
\end{bmatrix}\gamma_{t,k}\\
    \beta'_{t,1}&\triangleq\frac{16-16\cos{\theta_t}+2\theta_t\cos{\theta_t}-5\theta_t\sin{\theta_t}}{4\theta_t},\quad\beta'_{t,2}\triangleq\frac{15\sin{\theta_t}-8\theta_t-7\theta_t\cos{\theta_t}-\theta_t^2\sin{\theta_t}}{2\theta_t^4}\\
   \beta'_{t,3}&\triangleq\frac{8-5\theta_t\sin\theta_t+\theta_t^2\cos\theta_t-8\cos\theta_t}{2\theta_t^5},\quad\beta'_{t,4}\triangleq\frac{15\sin\theta_t-\theta_t^2\sin\theta_t-7\theta_t\cos\theta_t-8\theta_t}{2\theta_t^6}
\end{split}
\end{equation}
\hrulefill
\end{figure*}
\setcounter{equation}{25} 
  \begin{proposition}[Closed-expression for $SE_2(3)$]\label{proposition:Lie_log_evolution_SE3}
			In~\eqref{eqn:therem_3}, ${C}_{t,k}$ and the right Jacobian $ \dexpm{(-\xi_{t})}$ are given by
			\begin{equation}\label{eqn:se_n3_equation_1}
				\begin{split}
			&\dexpm(-\xi_t)=I+\sum_{j=1}^{4}\beta_{t,j}\adm^{j}_{\xi_t}\\
					&\beta_{t,1}=\frac{4 \cos\theta_t-4+\theta_t \sin \theta_t}{2 \theta_t^2}\quad\beta_{t,2}=\frac{4 \theta_t-5 \sin \theta_t+\theta_t \cos \theta_t}{2 \theta_t^3}\\
					&\beta_{t,3} = \frac{\theta_t \sin \theta_t+2 \cos \theta_t-2}{2 \theta_t^4}\quad\beta_{t,4}=\frac{2 \theta _t-3 \sin \theta_t+\theta_t \cos \theta_t}{2 \theta_t^5}
				\end{split}
			\end{equation}
		where $\theta_t\triangleq\| \omega_t \|$ and $C_{t,k}$ is evaluated in~\eqref{eqn:ck}.
		\end{proposition}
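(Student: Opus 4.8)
The plan is to collapse the infinite series defining $\dexpm(-\xi_t)$ and $C_{t,k}$ into finite sums by exploiting a polynomial identity satisfied by the adjoint matrix $\adm_{\xi_t}$ on $\mathfrak{se}_N(3)$, and then to determine the scalar coefficients by a spectral interpolation argument. \emph{Step 1: a degree-five identity for $\adm_{\xi_t}$.} Write $\xi_t^\wedge$ with rotational part $\omega_t^\wedge$ and translational parts $v_{t,1},\dots,v_{t,N}$, and set $\theta_t\triangleq\|\omega_t\|$. Because $SE_N(3)=SO(3)\ltimes(\mathbb R^3)^N$, the matrix $\adm_{\xi_t}$ is block lower-triangular: a block-diagonal part with every diagonal block equal to $\omega_t^\wedge$, plus a strictly-lower part $Q$ carrying the blocks $v_{t,i}^\wedge$ in the first block-column, with $Q^2=0$, $Q$ mapping the first block into the remaining ones and annihilating the remaining ones. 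Combining this with $(\omega_t^\wedge)^3=-\theta_t^2\omega_t^\wedge$, the rank-one identity $(\omega_t^\wedge)^2+\theta_t^2 I=\omega_t\omega_t^\top$, and $\omega_t^\top(v_{t,i}\times\omega_t)=0$, one checks that $\adm_{\xi_t}(\adm_{\xi_t}^2+\theta_t^2)^2=0$, i.e.\ $\adm_{\xi_t}^{5}=-2\theta_t^2\adm_{\xi_t}^{3}-\theta_t^4\adm_{\xi_t}$; equivalently $\adm_{\xi_t}$ has spectrum $\{0,\pm i\theta_t\}$ with $0$ simple and $\pm i\theta_t$ of index at most two.

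\emph{Step 2: closed form of $\dexpm(-\xi_t)$.} By Step 1, every power $\adm_{\xi_t}^{\,n}$ with $n\ge1$ is a $\theta_t$-dependent combination of $\adm_{\xi_t},\dots,\adm_{\xi_t}^{\,4}$, so the series $\dexpm(-\xi_t)=\sum_{i\ge0}\frac{(-1)^i}{(i+1)!}\adm_{\xi_t}^{\,i}$ collapses to $I+\sum_{j=1}^4\beta_{t,j}\adm_{\xi_t}^{\,j}$. Writing $g(\lambda)=\sum_{i\ge0}\frac{(-1)^i}{(i+1)!}\lambda^i=\frac{1-e^{-\lambda}}{\lambda}$, this finite expression is exactly the Hermite interpolant of $g$ at the nodes $0$ (simple) and $\pm i\theta_t$ (double), so the four conditions matching $g$ and $g'$ at $\pm i\theta_t$ determine $\beta_{t,1},\dots,\beta_{t,4}$; solving that $4\times4$ system and simplifying yields the trigonometric expressions in~\eqref{eqn:se_n3_equation_1}, and the case $N=0$ reproduces the standard right Jacobian of $SO(3)$.

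\emph{Step 3: closed form of $C_{t,k}$.} The key observation is that $\sum_{r=0}^{i-1}\adm_{\xi_t}^{\,r}\adm_{\gamma_{t,k}}\adm_{\xi_t}^{\,i-1-r}$ is the Fréchet derivative of $M\mapsto M^i$ at $\adm_{\xi_t}$ in the direction $\adm_{\gamma_{t,k}}$; since $x\mapsto\adm_x$ is linear, $\adm_{\xi_t}+s\adm_{\gamma_{t,k}}=\adm_{\xi_t+s\gamma_{t,k}}$, whence $C_{t,k}=\bigl(\tfrac{d}{ds}\big|_{s=0}\dexpm(-(\xi_t+s\gamma_{t,k}))\bigr)\gamma_{t,k}$. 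Differentiating the Step 2 formula along $s\mapsto\adm_{\xi_t+s\gamma_{t,k}}$ produces two families of terms: one from varying the scalar argument $\theta_t=\|\omega_t\|$, whose rate is $\theta'_t=[\,\omega_t^\top\ 0\ \cdots\ 0\,]\gamma_{t,k}$ up to a factor $\theta_t$ — this contributes the coefficients $\beta'_{t,j}$ (built from the derivatives of the $\beta_{t,j}$) multiplying $\adm_{\xi_t}^{\,j}\gamma_{t,k}$ — and one from linearizing the matrix powers, namely $\beta_{t,j}\sum_{r=0}^{j-1}\adm_{\xi_t}^{\,r}\adm_{\gamma_{t,k}}\adm_{\xi_t}^{\,j-1-r}$ (where the $r=j-1$ term drops since $\adm_{\gamma_{t,k}}\gamma_{t,k}=0$). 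Collecting the two families gives~\eqref{eqn:ck}.

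\emph{Main obstacle.} The real work is Step 1 — establishing the degree-five annihilating polynomial for arbitrary $N$ with the semidirect-product block structure — together with, in Step 3, carefully bookkeeping how the $\theta_t$-dependence of the coefficients differentiates, since that is precisely what produces the $\beta'_{t,j}\theta'_t$ terms and the exact form of the $\beta'_{t,j}$. The remainder is lengthy but routine trigonometric simplification needed to bring $\beta_{t,j}$ and $\beta'_{t,j}$ into the displayed closed forms.
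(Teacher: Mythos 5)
Your proposal is correct and follows essentially the same route as the paper: collapse the series for $\dexpm(-\xi_t)$ via the annihilating identity $\adm_{\xi_t}^5+2\theta_t^2\adm_{\xi_t}^3+\theta_t^4\adm_{\xi_t}=0$, then obtain $C_{t,k}$ by differentiating the resulting degree-four closed form with the product/chain rule (your Fr\'echet-derivative reading of $C_{t,k}$ matches the paper's Lemma~\ref{lemma:support_lemma_2} plus $\partial_{\xi_i}\theta=\omega_i/\theta$). The only differences are that you derive the degree-five identity for general $N$ from the semidirect-product block structure and pin down the $\beta_{t,j}$ by Hermite interpolation at $\{0,\pm i\theta_t\}$, whereas the paper simply cites the identity and states the coefficients.
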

		\begin{proof}
			For any $\xi^\wedge\in \mathfrak{se}_2(3)$, we have the identity $\ad^5_{\xi^\wedge}+2\theta^2\ad^3_{\xi^\wedge}+\theta^4\ad_{\xi^\wedge}\equiv 0 $, by (7.62) of~\cite{Hall:371445}. Using this identity, we express $\dexpm(-\xi_t)$ as a power sum of $\adm_{\xi_t}$ up to the fourth order term, given by ~\eqref{eqn:se_n3_equation_1}.
			The result of $C_{t,k}$ is obtained by differentiating the right hand side of \eqref{eqn:se_n3_equation_1} using the product rule. By the chain rule,  we obtain $	\sum_{k=1}^{d} \left(\partial_{\xi_{i}}(\adm_{\xi})^j\right)$ and $\partial_{\xi_{j}}\theta$ to evaluate,  where $	\sum_{k=1}^{d} \left(\partial_{\xi_{i}}(\adm_{\xi})^j\right)$ is evaluated in Lemma~\ref{lemma:support_lemma_2}. Next we briefly provide the evaluation of $\partial_{\xi_{i}}\theta$ by using the fact $\theta^2=\omega^\top\omega$. Differentiating both sides with respect to $\xi_{i}$ and re-arranging, we get that $\partial_{\xi_{i}}\theta=\frac{\omega_i}{\theta}$, from which the values of  $C_{t,k}$'s can be solved.
\end{proof}
{
\subsection{Experiments}
 We employ simulation to compare the statistics of transformed errors obtained from SDE's on state within $SE_2(3)$~\eqref{eqn:estimator}($SE_2(3)$-SDE), SDE's on the error within the vector space $\mathbb{R}^9$~\eqref{eqn:therem_3}(${R}^9$-SDE), and the linear approximation via~\eqref{eqn:linear approximation}(linear-SDE). We use the Euler–Maruyama~\cite{Piggott2016GeometricES} technique to numerically solve SDE's and 1000 samples for statistics. The parameters $v_b$ and $v_g$ are set to $[0;0;0.5;0;0;0;0;0;9.81]$ and $[0;0;0;0;0;0;0;0;-9.81]$ respectively, and the angular velocity and acceleration noise are set to $0.01$ and $0.1$, respectively, to simulate the kinematic system of the IMU.
\begin{figure}[htbp]
    \centering
    \includegraphics[width=0.5\textwidth]{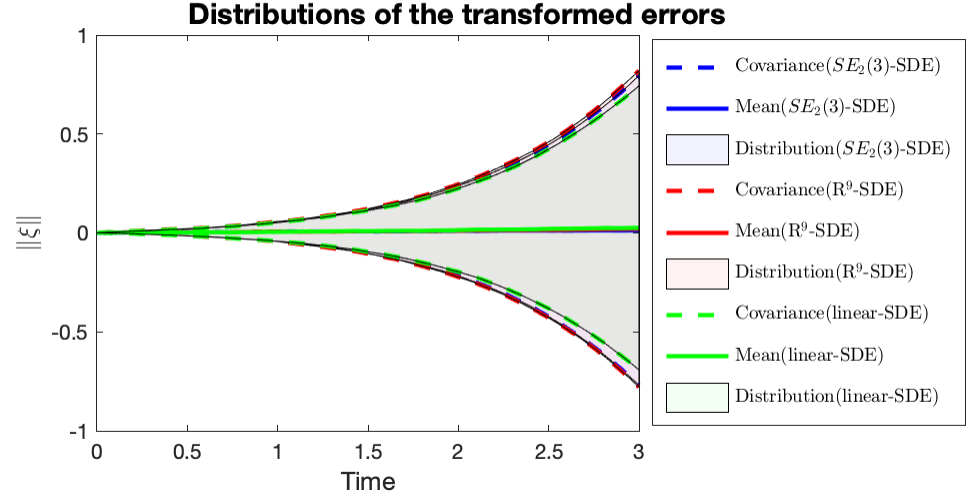}
    \caption{Evolution of errors for three methods, which are obtained by SDE's on state, the transformed error, and linear approximation. The means of all methods are close to zero, and the covariance calculated by the linear approximation is less than the other two, resulting in an inconsistent covariance during the propagation.}
    \label{fig:error_evo}
\end{figure}

Figure~\ref{fig:error_evo} show that the error statistics obtained from $SE_2(3)$-SDE and ${R}^9$-SDE~\eqref{eqn:therem_3} are consistent, while linear-SDE produces a covariance over time that is smaller than the other two and not consistent.
}

			\section{Conclusions}\label{sec:conclusion}
			
			%%+++Introduction+++%%
			In this paper, we analyzed the diffusion processes in the linear group system and provided an It\^o SDE description for the projected dynamics evolving in the Lie algebra. By the previously analysis, we studied the development of the invariant error of two trajectories in the presence of disturbances, and showed the error dynamics in both ODE's and SDE's forms. Our explicit and accurate derivation of error dynamics for a concrete matrix group $SE_2(3)$ may have practical applications in estimation of the IMU.
			Estimation tasks often necessitate the use of filtering in affine group systems with stochastic dynamics, where precise estimation is contingent upon a thorough comprehension of our derivations.
			
			%Additionally, we compare the numerical results with those from the invariant extended Kalman filtering (InEKF) through simulations.

			\appendices \section{Proof of Proposition~\ref{pro:linear_system_property} }\label{apx:linear_system_property}		%	  In \cite{Barrau},  the definition of $A_t$ using $g_{u_t}(\exp{(\xi)})=(A_t\xi)^\wedge$ is impressive and promising in practice. Now we will not focus more on the computation of $A_t$ but instead explore the set of $A_t$ and give a necessary and sufficient condition on the linear group system with such $A_t$. 

				We introduce the following supporting lemma for the proof.
				
				%	  \begin{proof}	
					%	  	First we define the following mapping: for $x\in\mathbb R$,
					%	  	\begin{equation*}
						%	  		h(x)=\exp{\left(x{A} \right) }{B}\exp{\left(-x{A} \right) }.
						%	  	\end{equation*}	
					%	  	which admits a Taylor series at $x_{0}=0$ as
					%	  	$h(x)=\sum_{n=0}^{\infty}\frac{1}{n!}h_n x^{n}$,
					%	  	where $h_n$ are matrix Taylor coefficients of appropriate dimensions.  Evaluating the above series at $x=0$,  we have $h_0= B$.
					%	  	
					%	  	
					%	  	Taking derivative of $h(x)$ with respect to $x$, it yields that
					%	  	\begin{align*}
						%	  		\frac{\rm d}{\rm dx}h(x)&=\frac{\rm d e^{x A}}{\rm dx}  B e^{(-x A)}+e^{(x A)} B     \frac{\rm d e^{(-x A)}}{\rm dx} \\
						%	  		&= Ae^{\left(x{A} \right) }{B}e^{\left(-x{A} \right) }-e^{\left(x{A} \right) }{B}e^{\left(-x{A} \right) } A\\
						%	  		&=\left[ A, e^{\left(x{A} \right) }{B}e^{\left(-x{A} \right) }\right]\\
						%	  		&= [ A, h(x)]\\
						%	  		&= \sum_{n=0}^{\infty}\frac{1}{n!}\left[ A,  h_n\right]  x^{n},
						%	  	\end{align*} 
					%	  	where the last equality holds due to the continuity of the Lie bracket. On the other hand, we know that for any $x\in\mathbb R$,
					%	  	\begin{equation*}
						%	  		\sum_{n=0}^{\infty}\frac{1}{n!}h_{n+1} x^{n}=
						%	  		\frac{\rm d}{\rm dx}h(x)=
						%	  		\sum_{n=0}^{\infty}\frac{1}{n!}\left[ A,  h_n\right]  x^{n}. 
						%	  	\end{equation*}
					%	  	Evaluating the above equation at $x=0$, it leads to
					%	  	$h_1=[ A, h_0]=[ A,  B]$. Keeping taking higher-order derivatives of $h(x)$, it leads to that $h_{n+1}=
					%	  	\left[ A,  h_n\right]$ for $n=0,1,\ldots$. Finally, we conclude the proof by setting $x=1$ for 
					%	  	$h(x)$.
					%	  \end{proof}
				
				\begin{lemma}\label{lemma:supporting_lemma_of_proposition_1}
					Consider  a linear dynamical system \begin{equation}\label{eqn:LTV_dynamical_system}
						\dot{y}_t=A_t y_t ,~t\geq 0,
					\end{equation}
					in 
					$\mathfrak g$, where $\mathfrak g$  is the Lie algebra of a connected Lie group $G$, with  $y^\wedge_t\in  \mathfrak g$ and a given initial condition $y_{t_0}= y_0$. The linear mapping $A_t$ satisfies $[(A_t y_1)^\wedge,(A_t y_2)^\wedge]^\vee=A_t([y_1^\wedge,y_2^\wedge])^\vee$ mentioned in Lemma~\ref{lemma:A_t property}. Let $\Phi_{t,t_0}$ be the state transition mapping, i.e., $\Phi_{t,t_0}: y_{t_0}\mapsto y_t$. Then $\Phi_{t,t_0}$
					has the following properties:
					\begin{enumerate}[(i).]
						\item The dynamics of $\Phi_{t,t_0}$, $t>0$, satisfies a differential equation $\dot{\Phi}_{t,t_0}=A_t\Phi_{t,t_0}$ with initial state $\Phi_{t_0,t_0}$ being the identity mapping.
						\item  For any $\Phi_{t,t_0}$, $t>0$ and $ y,y'\in\mathbb{R}^d$,   $\Phi_{t,t_0}$ is a linear map satisfying 
						\begin{equation*}
							\Phi_{t,t_0}[y^\wedge,y'^\wedge]^\vee=[(\Phi_{t,t_0}y)^\wedge,(\Phi_{t,t_0}y')^\wedge]^\vee.
						\end{equation*}
					\end{enumerate}
				\end{lemma}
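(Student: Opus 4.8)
The plan is to treat both parts as standard facts about the fundamental solution (state transition matrix) of the linear time-varying system~\eqref{eqn:LTV_dynamical_system}, with the bracket identity of Lemma~\ref{lemma:A_t property} as the only non-generic ingredient. For part~(i), I would first note that linearity of~\eqref{eqn:LTV_dynamical_system} together with superposition makes $\Phi_{t,t_0}\colon y_{t_0}\mapsto y_t$ a linear map on $\mathbb{R}^d$ for each fixed $t$, i.e.\ a $d\times d$ matrix, which also supplies the linearity of $\Phi_{t,t_0}$ asserted in~(ii). Differentiating the identity $y_t=\Phi_{t,t_0}y_{t_0}$ in $t$ and substituting $\dot y_t=A_t y_t$ gives $\dot\Phi_{t,t_0}\,y_{t_0}=A_t\Phi_{t,t_0}\,y_{t_0}$ for every initial vector $y_{t_0}$, hence $\dot\Phi_{t,t_0}=A_t\Phi_{t,t_0}$; and $\Phi_{t_0,t_0}y_{t_0}=y_{t_0}$ identifies $\Phi_{t_0,t_0}$ with the identity mapping. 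Existence and uniqueness of such a $\Phi$ are the usual linear-ODE facts under the standing regularity of $t\mapsto A_t$, and I will reuse uniqueness below.

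For the bracket-preservation part of~(ii), I would fix $y,y'\in\mathbb{R}^d$ and study the defect curve
\begin{equation*}
\Delta(t)\triangleq\Phi_{t,t_0}[y^\wedge,y'^\wedge]^\vee-\big[(\Phi_{t,t_0}y)^\wedge,(\Phi_{t,t_0}y')^\wedge\big]^\vee\in\mathbb{R}^d ,
\end{equation*}
which vanishes at $t=t_0$ because both summands reduce to $[y^\wedge,y'^\wedge]^\vee$ there. Differentiating in $t$, using part~(i) and bilinearity of the bracket $[\cdot,\cdot]$, the first summand contributes $A_t\Phi_{t,t_0}[y^\wedge,y'^\wedge]^\vee$ and the second contributes two ``cross'' brackets, each carrying one factor of $A_t$ in a single slot. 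The crux is to collapse those two cross terms: applying the Leibniz-type identity of Lemma~\ref{lemma:A_t property} with its two arguments set to $\Phi_{t,t_0}y$ and $\Phi_{t,t_0}y'$ rewrites their sum as $A_t\big[(\Phi_{t,t_0}y)^\wedge,(\Phi_{t,t_0}y')^\wedge\big]^\vee$. Hence $\dot\Delta(t)=A_t\Delta(t)$ with $\Delta(t_0)=0$, so $\Delta\equiv 0$ by uniqueness for the homogeneous linear ODE, which is precisely the stated identity.

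I do not expect a genuine obstacle here: the argument is the classical device of showing that a defect obeys a homogeneous linear ODE with zero initial data and is therefore identically zero. The only step requiring care is the bookkeeping in differentiating $[(\Phi_{t,t_0}y)^\wedge,(\Phi_{t,t_0}y')^\wedge]^\vee$ and the precise instantiation of Lemma~\ref{lemma:A_t property} so that the two cross terms recombine into $A_t$ applied to the bracket; everything else is routine. Connectedness of $G$ is not needed in this lemma --- it enters only later, when the flow-level statement of Proposition~\ref{pro:linear_system_property} is recovered from this Lie-algebra-level identity.
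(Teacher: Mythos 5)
Your proof is correct and follows essentially the same route as the paper's: both arguments hinge on using the Leibniz-type identity of Lemma~\ref{lemma:A_t property} to show that the bracket of two solutions of $\dot y_t=A_ty_t$ again solves that linear ODE, and then invoke uniqueness of solutions (you phrase this as a vanishing defect $\Delta(t)$ obeying the homogeneous equation, the paper as two trajectories sharing the same initial condition, which is the same argument). You also correctly identify that the hypothesis actually needed is the derivation identity $[(A_ty_1)^\wedge,y_2^\wedge]^\vee+[y_1^\wedge,(A_ty_2)^\wedge]^\vee=A_t[y_1^\wedge,y_2^\wedge]^\vee$ from Lemma~\ref{lemma:A_t property}, rather than the automorphism-type identity as (mis)stated in the lemma's hypothesis.
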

				\begin{proof}
					The first property follows from Theorem 5.1 of~\cite{hespanha2018linear}. We will prove the second property.
					Given any $y,y'\in \mathfrak{g}$ and consider the following three trajectories:
					\begin{align*}
						y_{0,t}= \Phi_{t,t_0}y_{0,t_0},~
						y_{1,t}= \Phi_{t,t_0}y_{1,t_0},~\hbox{and~}
						y_{2,t}= \Phi_{t,t_0}y_{2,t_0},
					\end{align*}
					where $y^\wedge_{0,t_0}=[y^\wedge,y'^\wedge]$, 
					$y^\wedge_{1,t_0}=y^\wedge$ and $y^\wedge_{2,t_0}=y'^\wedge$, respectively. It is ready to show that the three trajectories all follow the linear dynamical system~\eqref{eqn:LTV_dynamical_system}.
					Let $z^\wedge_t=[y^\wedge_{1,t},y^\wedge_{2,t}]$. Then, for $t\geq 0$,
					\begin{align*}
						\dot{z}_t^\wedge&=[\dot{y}_{1,t}^\wedge,y_{2,t}^\wedge]+[y_{1,t}^\wedge,\dot{y}_{2,t}^\wedge]\\
						&=[(A_t y_{1,t})^\wedge,y^\wedge_{2,t}]+[y^\wedge_{1,t},(A_ty_{2,t})^\wedge]\\
						&=(A_t[y^\wedge_{1,t},y^\wedge_{2,t}]^\vee)^\wedge=(A_tz_t)^\wedge,
					\end{align*}
					where the third equality uses $[(A_t y_1)^\wedge,(A_t y_2)^\wedge]=(A_t([y_1^\wedge,y_2^\wedge])^\vee)^\wedge$. 
					Since $z_{t_0}=[y^\wedge,y'^\wedge]^\vee=y_{0,t_0}$, we have 
					$y_{0,t}=z_t$ for all $t\geq 0$, that is,
					$$
				\Phi_{t,t_0}[y^\wedge,y'^\wedge]^\vee=[(\Phi_{t,t_0}y)^\wedge,(\Phi_{t,t_0}y')^\wedge]^\vee,
					$$
					which completes the proof.		
				\end{proof}
				
				%		\begin{proof}
					%			The matrix $ {J}(\mathrm{ad}_{\xi})=\frac {1-e^{-\mathrm {ad} _{\xi}}}{\mathrm {ad} _{\xi}}$ can be expanded into a Taylor series as 
					%			$ {J}(\mathrm{ad}_{\xi})=\frac {1-e^{-\mathrm {ad} _{\xi}}}{\mathrm {ad} _{\xi}}=\sum_{n=0}^{\infty}
					%			\frac{1}{(n+1)!}{\ad_{\xi}}^n$.
					%			Since $\lambda_{i}$'s are the eigenvalues of $\ad_{\xi}$, i.e., there exists ${x}\neq{0}$ such that $\ad_{\xi}{x}=\lambda_{i}{x}$, we have that 
					%			\begin{equation}\label{eqn:series_eigen}
						%				\sum_{n=0}^{k}
						%				\frac{1}{(n+1)!}{\ad_{\xi}}^n  x =
						%				\sum_{n=0}^{k}
						%				\frac{1}{(n+1)!}{\lambda_i}^n  x.
						%			\end{equation}
					%			Taking limitation by letting $k\to \infty$ for the both sides of~\eqref{eqn:series_eigen}, we have $ {J}(\mathrm{ad}_{\xi}) x=\frac{1 - e^{-\lambda_{i}}}{\lambda_{i}}{x}$, that is, the eigenvalues of $ {J}(\mathrm{ad}_{\xi})$ are $\frac{1 - e^{-\lambda_{i}}}{\lambda_{i}}$'s. When $ \lambda _{i}\neq 2k\pi {i},k=\pm 1,\pm 2,\ldots$, $\frac{1 - e^{-\lambda_{i}}}{\lambda_{i}}\neq 0$, which completes the proof.
					%		\end{proof}
				
				Now we are ready to prove Proposition~\ref{pro:linear_system_property}.
				
				$(i)\Rightarrow (iii)$. 
				%				Let $\gamma(\tau):=\exp{(\xi_t \tau)}\in G$ denote the curve with $\gamma(0)=e$ and $\gamma'(0)=\xi_t$.  By the  definition of differential of the mapping $f_{u_t}(\cdot)$ in differential geometry field, we have that \begin{align*}
					%					\text{d}f_{u_t}|_{\gamma(\tau)}\cdot\gamma'|_{\tau=0}&=(f_{u_t}\circ\gamma)'_{\tau=0}\\
					%					\text{d}f_{u_t}|_{e}\cdot \xi_t& = \text{d}f_{u_t}\circ\exp (\xi_t).
					%				\end{align*}
				Theorem 7 in~\cite{Barrau} shows that if the state $X_t$ of the dynamical system in \eqref{eqn:linear group system} satisfies a linear group system, then its Lie logarithm $x_t$ satisfies a linear system $\dot{x_t}=A_tx_t$, where $A_t$ is calculated using $f_{u_t}(\expm{(x_t)})=(A_t x_t)^\wedge +O(\|x\|^2)$. The sketch of the proof is that the solution $\phi_t(\expm{(x_0)})$ of a linear group system with initial condition $\expm{(x_0)}$ satisfies $\phi_t(\expm{(x_0)})=\expm{(\Phi_tx_0)}$ and Barrau and Bonnabel demonstrate that $\Phi_t$ is also the state transition matrix of a linear system, i.e., $\frac{\text{d}}{\d{t}}\Phi_t=A_t\Phi_t$. See \cite{Barrau} for more details. 
				
				$(iii)\Rightarrow(ii)$. Consider the dynamics of $x_t$ in $(iii)$.
				Let $X_t=\expm{\left( x_t\right) }$. For any $X,Y\in G$, we can find $x_a, x_b\in \mathbb R^d$ such that $\expm(x_a)=X$ and $\expm(x_b)=Y$. Therefore, 
				\begin{align*}
					&\phi_t(X)\phi_t(Y)=\phi_t(\expm(x_a) )\phi_t(\expm( x_b))\\
					&=\expm( \Phi_{t,0}x_a) \expm( \Phi_{t,0}x_b)= \expm\left(\Phi_{t,0}( \text{BCH}(x_a ^\wedge,x_b^\wedge)^\vee\right) \\		
					&=\phi_t\left(\exp\left( \text{BCH}( x_a ^\wedge, x_b^\wedge)\right)\right)= \phi_t(XY),
				\end{align*}
				where  the fourth equality uses  Lemma~\ref{lemma:supporting_lemma_of_proposition_1}, and $\Phi_{t,0}$ is the state transition mapping discussed in Lemma~\ref{lemma:supporting_lemma_of_proposition_1} and $\text{BCH}(x_a^\wedge,x_b^\wedge)\triangleq x_a^\wedge+x_b^\wedge+\frac{1}{2}[x_a^\wedge,x_b^\wedge]+\cdots$ is the Baker–Campbell–Hausdorff formula for $x_a^\wedge,x_b^\wedge\in\mathfrak{g}$.
				
				$(ii)\Rightarrow (i)$. 	 For any $X,Y\in G$ and $t>0$, we have 
				\begin{align*}
					f_{u_t}(\phi_t(XY))&=\frac{\d{}}{\d{t}}\phi_t(XY)=\frac{\d{}}{\d{t}}\left( \phi_t(X)\phi_t(Y)\right)\\ 
					&=f_{u_t}(\phi_t(X))\phi_t(Y)+\phi_t(X) f_{u_t}(\phi_t(Y)).
				\end{align*}
				Therefore we have $f_{u_t}(XY)=f_{u_t}(X)Y+Xf_{u_t}(Y)$ for $t=0$,
				which completes the proof.

				\section{Proof of Theorem~\ref{thm:1}}\label{apx:SDE_of_Lie_logarithm}
				We only prove for the LID case; and for the RID case the proof is similar.
				The proof is divided into two parts. The first part uses It\^o's lemma to derive an SDE with undetermined coefficients and the second part involves calculating the coefficients from Proposition~\ref{pro:sde_estimate_state}.
				
				Suppose that the evolution of $x_t\in\mathbb R^d$ follows a formal SDE:
				\begin{equation}\label{eqn:xt}
					\begin{split}
						\d{x_t}&={F}_t\d{t}+{H}_t\d{W_t},
					\end{split}
				\end{equation}
				with undetermined coefficients ${F}_t\in\mathbb{R}^{d}$ and  ${H}_t\in\mathbb{R}^{d\times d}$.
				Since $X_t=\expm{\left(x_t\right)}\triangleq\exp{\left( \sum_{i=1}^{d}x_{t,i}{E}_i \right)}$, where $E_i$ is a standard orthogonal basis of the Lie algebra, the exact expression for $\d{X}_t$ can be given by  It\^o lemma with two terms:
				\begin{equation}\label{eqn:48}			\d{{X}_t}=\sum_{i=1}^{d}\d{x_{t,i}}\partial_{x_{t,i}}X+\frac{1}{2}\sum_{j=1}^{d}\sum_{i=1}^{d}\d{x_{t,j}}\d{x_{t,i}}\partial^2_{x_{t,ji}}X,
				\end{equation} 
				where $\partial_{x_{t,i}}X\triangleq 
				\frac{\partial \expm(x)}{\partial{x_{i}}}|_{x=x_t}$
				and $\partial^2_{x_{t,ji}}X\triangleq \frac{\partial^2 \expm(x)}{\partial x_{j}\partial x_{i}}|_{x=x_t}$.
				We first go into detail about $\d{x_{t,j}}\d{x_{t,i}}$,
				\begin{align*}
					&\d{x_{t,j}}\d{x_{t,i}}\\
					=&( F_{t,j}\d{t}+\sum_{k=1}^{d}H_{t,jk}\d{W}_{t,k}) ( F_{t,i}\d{t}+\sum_{k=1}^{d}H_{t,ik}\d{W}_{t,k}) \\
					=&\sum_{k=1}^{d}H_{t,jk}H_{t,ik}\d{t}.
				\end{align*}
				Substituting this and~\eqref{eqn:xt} into~\eqref{eqn:48}, we obtain that		\begin{equation}\label{eqn:SDE_M_N}
					\d{X}_t={M}_t\d{t}+{N}_t \d{W_t}
				\end{equation}
				with
				\begin{align*}
					{M}_t&\triangleq\sum_{i=1}^{d}F_{t,i}\partial_{x_{t,i}}X+ \frac{1}{2}\sum_{i,j,k=1}^{d}H_{t,jk}H_{t,ik}\partial^2_{x_{t,ji}}X,\\
					{N}_t&\triangleq\sum_{i=1}^{d}\sum_{k=1}^{d}H_{t,ik}\partial_{x_{t,i}}X.
				\end{align*}
				By Theorem 5 of~\cite{Hunacek2008}, we find 
				$
				\partial_{x_{t,i}}X=X_t \dexpm_{-x_t}{E}_i
				$, which further yields
				\begin{equation*}
					\begin{split}
						\sum_{i=1}^{d}F_{t,i}\partial_{x_{t,i}}X&= X_t\dexp_{-x_t}\sum_{i=1}^{d}F_{t,i}{E}_i= X_t\dexp_{-x_t}({F_t}^\wedge),\\
						\sum_{i=1}^{d}\sum_{k=1}^{d}H_{t,{ik}}\partial_{x_{t,i}}X&= X_t\dexp_{-x_t}\sum_{k=1}^{d}\text{col}_k({H})^\wedge.
					\end{split}
				\end{equation*}
				We next calculate $\Delta_{t,k}\triangleq\sum_{j=1}^{d}\sum_{i=1}^{d}H_{t,jk}H_{t,ik}\partial^2_{x_{t,ji}}X$ in the next lemma,
				the proof of which is further presented in the Appendix~\ref{apx:lemma_computation}.
				\begin{lemma}\label{lemma:calculate_support_lemma}
						$$
						\Delta_{t,k}=X_t\left(  \dexp_{-x_t}(\gamma^\wedge_{t,k})\right) ^2+X_t{C}^\wedge_{t,k},
						$$
						where ${C}_{t,k}\triangleq\sum_{i=0}^{\infty}\frac{(-1)^{i+1}}{(i+2)!}\ad_{\gamma^\wedge_{t,k}}(\ad_{x_t^\wedge})^i\gamma_{t,k}$
						with $\gamma_{t,k}\triangleq\text{col}_k({H_t})$.
					\end{lemma}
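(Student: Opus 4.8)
The plan is to recognize $\Delta_{t,k}$ as a purely second-order directional derivative of $\expm$ along a straight line, and then to differentiate the first-derivative formula (the one already used above) a second time. Since $\gamma_{t,k}=\mathrm{col}_k(H_t)$ has $i$-th component $H_{t,ik}$, the bilinear contraction defining $\Delta_{t,k}$ is
\begin{equation*}
\Delta_{t,k}=\sum_{j=1}^{d}\sum_{i=1}^{d}(\gamma_{t,k})_{j}(\gamma_{t,k})_{i}\,\partial^{2}_{x_{t,ji}}X=\partial_{s}^{2}\big|_{s=0}\,\expm\!\big(x_t+s\,\gamma_{t,k}\big),
\end{equation*}
i.e.\ $\Delta_{t,k}$ is the second derivative at $s=0$ of the curve $s\mapsto\expm(x_t+s\gamma_{t,k})$. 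The first step is to record the curve form of the first-order identity (Theorem~5 of~\cite{Hunacek2008}, the same fact that gives $\partial_{x_{t,i}}X=X_t\dexp_{-x_t}(E_i)$): for a differentiable $y(\cdot)$ in $\mathbb{R}^d$,
\begin{equation*}
\frac{\d{}}{\d{s}}\expm(y(s))=\expm(y(s))\,\dexp_{-y(s)}\!\big(\dot{y}(s)^{\wedge}\big),
\end{equation*}
which we specialize to $y(s)=x_t+s\gamma_{t,k}$, $\dot{y}(s)\equiv\gamma_{t,k}$.

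Differentiating this once more at $s=0$ by the product rule splits $\Delta_{t,k}$ into two pieces. When the outer derivative hits the leading factor $\expm(y(s))$, we obtain $\big(\expm(x_t)\dexp_{-x_t}(\gamma_{t,k}^{\wedge})\big)\dexp_{-x_t}(\gamma_{t,k}^{\wedge})=X_t\big(\dexp_{-x_t}(\gamma_{t,k}^{\wedge})\big)^{2}$, which is the first claimed term. When it hits $\dexp_{-y(s)}(\gamma_{t,k}^{\wedge})$, we obtain $X_t\,D_{t,k}$, so it remains to identify $D_{t,k}\triangleq\partial_{s}\big|_{s=0}\dexp_{-(x_t+s\gamma_{t,k})}(\gamma_{t,k}^{\wedge})$ with $C_{t,k}^{\wedge}$.

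For that identification I would expand $\dexp_{-y}=\sum_{n\ge0}\frac{(-1)^{n}}{(n+1)!}(\ad_{y^{\wedge}})^{n}$, use the linearity of the hat map and of $\ad$ in its subscript to write $\ad_{(x_t+s\gamma_{t,k})^{\wedge}}=\ad_{x_t^{\wedge}}+s\,\ad_{\gamma_{t,k}^{\wedge}}$, and differentiate the series term by term (the interchange of $\partial_s$ with the sum is legitimate since each summand is polynomial in $s$ and the series converges locally uniformly in the operator $\ad_{x_t^{\wedge}}+s\,\ad_{\gamma_{t,k}^{\wedge}}$). With $\partial_s\big|_{s=0}(\ad_{x_t^{\wedge}}+s\,\ad_{\gamma_{t,k}^{\wedge}})^{n}=\sum_{r=0}^{n-1}(\ad_{x_t^{\wedge}})^{r}\ad_{\gamma_{t,k}^{\wedge}}(\ad_{x_t^{\wedge}})^{n-1-r}$, this yields $D_{t,k}=\sum_{n\ge1}\frac{(-1)^{n}}{(n+1)!}\sum_{r=0}^{n-1}(\ad_{x_t^{\wedge}})^{r}\ad_{\gamma_{t,k}^{\wedge}}(\ad_{x_t^{\wedge}})^{n-1-r}\gamma_{t,k}^{\wedge}$; the $r=n-1$ summand vanishes because $\ad_{\gamma_{t,k}^{\wedge}}\gamma_{t,k}^{\wedge}=[\gamma_{t,k}^{\wedge},\gamma_{t,k}^{\wedge}]=0$, and after renaming the outer index this is exactly $C_{t,k}^{\wedge}$ in the (equivalent) double-sum form appearing in Theorem~\ref{thm:1}.

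The only real obstacle I foresee is this last step: threading the $\wedge/\vee$ identifications through cleanly, justifying the term-by-term differentiation of the $\dexp$ series, and repackaging the double sum into the advertised closed form. If the combinatorics get unwieldy, a convenient alternative is to start from the integral representation $\dexp_{-y}=\int_{0}^{1}e^{-u\,\ad_{y^{\wedge}}}\,\d{u}$ together with the standard parameter-derivative identity $\partial_{s}\big|_{s=0}e^{-u(\ad_{x_t^{\wedge}}+s\,\ad_{\gamma_{t,k}^{\wedge}})}=-u\int_{0}^{1}e^{-\tau u\,\ad_{x_t^{\wedge}}}\,\ad_{\gamma_{t,k}^{\wedge}}\,e^{-(1-\tau)u\,\ad_{x_t^{\wedge}}}\,\d{\tau}$; evaluating the elementary (Beta-function) integrals in $u$ and $\tau$ then reads off the coefficient $\frac{(-1)^{\ell+m+1}}{(\ell+m+2)!}$ of $(\ad_{x_t^{\wedge}})^{\ell}\ad_{\gamma_{t,k}^{\wedge}}(\ad_{x_t^{\wedge}})^{m}\gamma_{t,k}^{\wedge}$, which regroups into the same $C_{t,k}^{\wedge}$.
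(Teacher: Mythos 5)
Your proposal is correct and follows essentially the same route as the paper: writing $\Delta_{t,k}$ as $\partial_s^2\big|_{s=0}\expm(x_t+s\gamma_{t,k})$ is just a cleaner packaging of the paper's contraction $\sum_j H_{t,jk}\partial_{x_{t,j}}$, the product-rule split into $X_t\left(\dexp_{-x_t}(\gamma_{t,k}^\wedge)\right)^2$ plus the $s$-derivative of $\dexp_{-(x_t+s\gamma_{t,k})}(\gamma_{t,k}^\wedge)$ is identical to the paper's, and your formula $\partial_s\big|_{s=0}(\ad_{x_t^\wedge}+s\,\ad_{\gamma_{t,k}^\wedge})^n=\sum_{r=0}^{n-1}\ad_{x_t^\wedge}^r\ad_{\gamma_{t,k}^\wedge}\ad_{x_t^\wedge}^{n-1-r}$ is exactly the paper's Lemma~\ref{lemma:support_lemma_2} (which the paper proves by induction). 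The one caveat, shared with the paper's own proof, is that you terminate at the double-sum expression for $C_{t,k}$ used in Theorem~\ref{thm:1} and assert without verification that it equals the single-sum form $\sum_{i\ge 0}\frac{(-1)^{i+1}}{(i+2)!}\ad_{\gamma_{t,k}^\wedge}(\ad_{x_t^\wedge})^i\gamma_{t,k}$ appearing in the lemma statement; that equivalence is not immediate (a Jacobi-identity check at the $\adm_{\gamma_{t,k}}\adm_{x_t}^2\gamma_{t,k}$ order suggests the two differ by a factor), but this is an issue with the statement as printed rather than with your argument.
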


				%			We now have obtained the mapping on ${M}$ and ${N}$ with respect to $\xi$, which can be written as
				%			\begin{equation}\label{eqn:eta_relation_equation}
					%				{M}= \eta {J}_{-\xi}({F}^\wedge)+\frac{1}{2}\sum_{k=1}^{n}\Delta_k \quad
					%				{N}=\sum_{k=1}^{n}\eta \dexpm_{-\xi}(\text{col}_k({G})^\wedge).
					%			\end{equation}
				
				On the other hand, from Proposition~\ref{pro:linear_system_property}, we have
				\begin{equation*}
					f_{u_t}(X_t)=X_t\dexp_{-x_t}((A_tx_t)^\wedge).
				\end{equation*}
				Comparing~\eqref{eqn:dynamic_estimate_state_sde} and~\eqref{eqn:SDE_M_N} term by term, it implies that 
					\begin{equation*}
						\begin{split}
					{F}_t&=A_t x_t-\frac{1}{2}\dexpm^{-1}(-x_t)\sum_{k=1}^{d}{C}_{t,k},\\
					\text{col}_k({H}_t)&=\dexpm^{-1}(-x_t)s_k,
						\end{split}
					\end{equation*}
				which completes the proof.
				\hfill $\square$
				
				\section{Proof of Lemma~\ref{lemma:calculate_support_lemma}}\label{apx:lemma_computation}
						\begin{lemma}\label{lemma:support_lemma_2}
					$$			\sum_{j=1}^{d}H_{t,jk} \left(\partial_{x_{j}}(\adm_{x})^n\right)=\sum^{n-1}_{r=0}\adm^{r}_{x}\adm_{\gamma_{t,k}}\adm^{n-1-r}_{x},$$   for any $x\in\mathbb R^d$.
					
				\end{lemma}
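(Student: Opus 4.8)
The plan is to exploit the linearity of the adjoint matrix representation in its subscript. Since $v\mapsto\adm_v$ is a linear map from $\mathbb{R}^d$ into $\mathbb{R}^{d\times d}$, we may write $\adm_x=\sum_{j=1}^{d}x_j\,\adm_{e_j}$, where $e_j$ is the $j$-th standard basis vector of $\mathbb{R}^d$. Differentiating with respect to the scalar $x_j$ then gives the constant matrix $\partial_{x_j}\adm_x=\adm_{e_j}$, independent of $x$. This is the only structural input needed; it is just a restatement of the bilinearity of the Lie bracket.

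Next I would differentiate the $n$-th power $(\adm_x)^n$, regarded as a product of $n$ \emph{non-commuting} copies of $\adm_x$, using the Leibniz rule:
\begin{equation*}
\partial_{x_j}(\adm_x)^n=\sum_{r=0}^{n-1}(\adm_x)^r\bigl(\partial_{x_j}\adm_x\bigr)(\adm_x)^{n-1-r}=\sum_{r=0}^{n-1}\adm_x^{\,r}\,\adm_{e_j}\,\adm_x^{\,n-1-r}.
\end{equation*}
The same identity can alternatively be obtained by induction on $n$: the base case $n=1$ is precisely $\partial_{x_j}\adm_x=\adm_{e_j}$, and the inductive step factors $(\adm_x)^n=(\adm_x)(\adm_x)^{n-1}$ and combines the product rule with the inductive hypothesis. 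I would present whichever is cleaner in context; the inductive version makes the bookkeeping of the slot index $r$ transparent.

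Finally I would multiply by $H_{t,jk}$, sum over $j$, and move the sum inside each term using linearity of $v\mapsto\adm_v$ once more:
\begin{equation*}
\sum_{j=1}^{d}H_{t,jk}\,\partial_{x_j}(\adm_x)^n=\sum_{r=0}^{n-1}\adm_x^{\,r}\Bigl(\sum_{j=1}^{d}H_{t,jk}\,\adm_{e_j}\Bigr)\adm_x^{\,n-1-r}=\sum_{r=0}^{n-1}\adm_x^{\,r}\,\adm_{\gamma_{t,k}}\,\adm_x^{\,n-1-r},
\end{equation*}
where the last equality uses $\sum_{j=1}^{d}H_{t,jk}\,e_j=\text{col}_k(H_t)=\gamma_{t,k}$ together with linearity of the map $\adm$. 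This is exactly the claimed formula. There is no genuine obstacle here; the only point that deserves care is the non-commutative Leibniz rule — keeping $\adm_{e_j}$ in every one of the $n$ slots rather than collapsing the $\adm_x$ factors — and confirming that $\partial_{x_j}\adm_x$ is the constant matrix $\adm_{e_j}$, which is immediate from the bilinearity of the bracket. Once these are in place, the contraction against $H_{t,jk}$ is a one-line computation.
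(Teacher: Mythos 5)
Your proposal is correct and follows essentially the same route as the paper: the paper establishes the non-commutative Leibniz identity $\partial_{x_j}(\adm_x)^n=\sum_{r=0}^{n-1}\adm_x^{r}\adm_{E_j}\adm_x^{n-1-r}$ by induction on $n$ (working with the bracket $[x,y]$ acting on a test element $y$), which is exactly the product-rule step you state, and then contracts against $H_{t,jk}$ using linearity of $v\mapsto\adm_v$ just as you do. The only difference is presentational — you invoke the Leibniz rule directly (offering the induction as an alternative), while the paper writes out the induction — so no further comment is needed.
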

				\begin{proof}
					The proof is by induction. Using the definition of $\ad$, for the case $n=1$ and any $y\in\mathfrak g$,
					\begin{align*}
						\sum_{j=1}^{d}H_{t,jk} \partial_{x_j}(\ad_{x^\wedge}y)
						&= \sum_{j=1}^{d}H_{t,jk}\partial_{x_j}[x,y]\\
						&= \sum_{j=1}^{d}H_{t,jk}\left( {E}_jy-y{E}_j\right)=\ad_{\gamma^\wedge_{t,k}}y.
					\end{align*}
					Assume the result holds for the case $n$, we show it holds for the case $n+1$. The $n+1$ case writes
					\begin{align*}
	 \partial_{x_j}(\ad^{n+1}_{x^\wedge}y)&= \partial_{x_j}\left( x^\wedge\ad^{n}_{x^\wedge}y-\ad^{n}_{x^\wedge}yx^\wedge\right)\\
	 &=\ad_{E_j}\ad^{n}_{x^\wedge}y+\ad_{x^\wedge} \partial_{x_j}(\ad^{n}_{x^\wedge}y)\\
	 &=\ad_{E_j}\ad^{n}_{x^\wedge}y+\ad_{x^\wedge}\sum^{n-1}_{r=0}\ad^{r}_{x^\wedge}\ad_{E_j}\ad^{n-1-r}_{x^\wedge}y\\
	 &=\sum^{n}_{r=0}\ad^{r}_{x^\wedge}\ad_{E_j}\ad^{n-r}_{x^\wedge}y,
					\end{align*}
			where the RHS matches the result for the case $n+1$. This completes the proof.
				\end{proof}
				
			\fbox{Proof of Lemma~\ref{lemma:calculate_support_lemma}}
We have that
				\begin{equation*}
					\begin{split}
						&\sum_{j=1}^{d}\sum_{i=1}^{d}H_{t,jk}H_{t,ik}\partial_{x_{t,ji}}^2X_t\\
						=&\sum_{j=1}^{d}H_{t,jk}\partial_{x_{t,j}}( X_t\dexp_{-x_t})(\gamma_{t,k}^\wedge)\\
						=& \sum_{j=1}^{d}H_{j,k}\left( (\partial_{x_j}X_t)\dexp_{-x_t}+X_t (\partial_{x_j}\dexp_{-x_t}) \right)(\gamma^\wedge_{t,k})\\
						=&X_t(\dexp_{-x_t}\gamma_{t,k}^\wedge)^2+\sum_{j=1}^{d}H_{t,jk}X_t (\partial_{x_j}\dexp_{-x_t})(\gamma_{t,k}^\wedge).
					\end{split}
				\end{equation*}
				Let $ {C}^\wedge_k\triangleq\sum_{j=1}^{d}H_{j,k} (\partial_{x_j}\dexp_{-x_t})(\gamma^\wedge_k) $. By Lemma~\ref{lemma:support_lemma_2},  we rewrite ${C}_k$ ,
				\begin{equation*}
					\begin{split}
					{C_k}&=\sum_{j=1}^{d}H_{j,k} (\partial_{x_j}\dexpm({-x_t}))\gamma_{t,k}\\
					&=\sum_{i=1}^{\infty}\frac{(-1)^{i}}{(i+1)!}\sum^{i-1}_{r=0}\adm^{r}_{x}\adm_{\gamma_{t,k}}\adm^{i-1-r}_{x}\gamma_{t,k},
					\end{split}
				\end{equation*}
				which completes the proof.
				\hfill $\square$

			\bibliographystyle{unsrt}
			% argument is your BibTeX string definitions and bibliography database(s)
			\bibliography{ref}
			%
			% <OR> manually copy in the resultant .bbl file
			% set second argument of \begin to the number of references
			% (used to reserve space for the reference number labels box)
			%\begin{thebibliography}{1}
			%
			%\bibitem{IEEEhowto:kopka}
			%H.~Kopka and P.~W. Daly, \emph{A Guide to \LaTeX}, 3rd~ed.\hskip 1em plus
			%  0.5em minus 0.4em\relax Harlow, England: Addison-Wesley, 1999.
			%
			%\end{thebibliography}

			% that's all folks
		\end{document}